\newcommand{\rd}{{\mathrm d}}
\newcommand{\R}{\mathbb R}
\newcommand{\e}{{\mathrm e}}
\newcommand{\Ccal}{{\mathcal C}}
\newcommand{\vertiii}[1]{{\left\vert\kern-0.25ex\left\vert\kern-0.25ex\left\vert #1 
		\right\vert\kern-0.25ex\right\vert\kern-0.25ex\right\vert}}
\newcommand{\eq}[1]{\begin{equation} \begin{aligned}#1\end{aligned} \end{equation}}
\newcommand{\tr}{\mathrm{Tr}}
\newcommand{\eqarray}[1]{\begin{eqnarray} #1 \end{eqnarray}}
\newcommand{\ket}[1]{\vert #1 \rangle}
\newcommand{\bra} [1] {\langle #1 \vert}
\newcommand{\ketbra}[2]{| #1 \rangle \langle #2 |}
\newcommand{\dd}{\mathrm{d}}
\newcommand{\trans}{T}
\newcommand{\pur}{\mathcal{P}}
\newcommand{\entr}{\mathcal{S}}
\newcommand{\cent}{\lambda}
\theoremstyle{plain}
\newtheorem{thm}{\protect\theoremname}
  \theoremstyle{plain}
  \newtheorem{lem}[thm]{\protect\lemmaname}
  \theoremstyle{remark}
  \theoremstyle{plain}
  \newtheorem{conjecture}[thm]{\protect\conjname}
    \theoremstyle{plain}
  \newtheorem{corollary}[thm]{\protect\corollaryname}
  \providecommand{\claimname}{Claim}
  \providecommand{\corollaryname}{Corollary}
  \providecommand{\lemmaname}{Lemma}
  \providecommand{\factname}{Fact}
    \providecommand{\conjname}{Conjecture}
\providecommand{\theoremname}{Theorem}
\begin{document}
		
	\title{Equalities and inequalities from entanglement, loss, and beam splitters}
 
\author{Anaelle Hertz}
\affiliation{National Research Council of Canada, 100 Sussex Drive, Ottawa, Ontario K1N 5A2, Canada}
\author{Noah Lupu-Gladstein}
\affiliation{National Research Council of Canada, 100 Sussex Drive, Ottawa, Ontario K1N 5A2, Canada}
\affiliation{Department of Physics, University of Ottawa, 25 Templeton Street, Ottawa, Ontario, K1N 6N5 Canada}
\author{Khabat Heshami}
\affiliation{National Research Council of Canada, 100 Sussex Drive, Ottawa, Ontario K1N 5A2, Canada}
\affiliation{Department of Physics, University of Ottawa, 25 Templeton Street, Ottawa, Ontario, K1N 6N5 Canada}
\affiliation{Institute for Quantum Science and Technology, Department of Physics and Astronomy, University of Calgary, Alberta T2N 1N4, Canada}
\author{Aaron Z. Goldberg}
\affiliation{National Research Council of Canada, 100 Sussex Drive, Ottawa, Ontario K1N 5A2, Canada}

	\setcounter{tocdepth}{1}

\begin{abstract}
Quantum optics bridges esoteric notions of entanglement and superposition with practical applications like metrology and communication. Throughout, there is an interplay between information theoretic concepts such as entropy and physical considerations such as quantum system design, noise, and loss. Therefore, a fundamental result at the heart of these fields has numerous ramifications in development of applications and advancing our understanding of quantum physics. Our recent proof for the entanglement properties of states interfering with the vacuum on a beam splitter led to monotonicity and convexity properties for quantum states undergoing photon loss [Lupu-Gladstein {\it et al.}, arXiv:2411.03423 (2024)] by breathing life into a decades-old conjecture. In this work, we extend these fundamental properties to measures of similarity between states, provide inequalities for creation and annihilation operators beyond the Cauchy-Schwarz inequality, prove a conjecture [Hertz {\it et al.}, PRA {\bf 110}, 012408 (2024)] dictating that nonclassicality through the quadrature coherence scale is uncertifiable beyond a loss of 50\%, and place constraints on quasiprobability distributions of all physical states. These ideas can now circulate afresh throughout quantum optics.
\end{abstract}
 	\maketitle

Addressing problems at the intersection of quantum information theory and physics of quantum systems is critical to the development of quantum technologies. In quantum optics, and particularly continuous-variable quantum information processing, optical loss is an omnipresent impediment. It is paramount to capture its effect on notions of nonclassicality and on the effectiveness of encoding and decoding quantum information stored in physical states. 

We proved in Ref.~\cite{CompanionShortarXiv} key properties of the entanglement beam splitters generate when they interfere quantum states with the vacuum, resolving a long-standing conjecture~\cite{Asbothetal2005,Asboth2024} about the optimality of balanced (``50/50'') beam splitters. Due to these intricate connections that stem from the ubiquitousness of beam splitters for both theoretical~\cite{YuenShapiro1978,YuenShapiro1980,Yurke1985,MandelWolf1995,SalehTeich2007,AaronsonArkhipov2013} and practical~\cite{Grangier_1986,HongOuMandel1987,BornWolf1999,LIGO2011,Lucamarinietal2018,Flaminietal2019} considerations, our investigations led to many related results that we detail here.

In this work, we consider a state to be nonclassical if it cannot be described as a stochastic mixture of coherent states. The effect of loss on nonclassicality has been studied to evaluate the limits of quantum communication~\cite{GrosshansGrangier2002}, decoherence rates~\cite{Zurek1993,Dodonov_2000,Zurek2003,Hertz,Rosiek2024},
and the spoiling of quantum advantages in photonic quantum computation~\cite{MariEisert2012} and boson sampling~\cite{RahimiKesharietal2016,OszmaniecBrod2018,Qietal2020}.
In Refs.~\cite{Goldbergetal2023,HGH}, we analytically and numerically observed that nonclassicality for well-known resource states measured through the quantum coherence scale (QCS) stops being certifiable when the states lose 50\% or more of their photons, which led to a conjecture that all states lose their QCS-certified nonclassicality at the maximum loss of 50\%. In this work, we prove this conjecture by directly linking the QCS of a state undergoing loss to the evolution of purity as a function of the loss parameter.

Further, since many properties of a state, such as purity or nonclassicality, can be described with quasiprobability distributions in phase space, our work also gives robust equalities and inequalities for quasiprobability distributions. These go beyond previous relationships that are typically used for inspecting nonclassicality  \cite{SperlingVogel2009negativequasi,Bohmann,SteuernagelLee2023arxiv,Chabaudetal2024}.

These results for quasiprobability distributions are especially valuable because quasiprobability distributions are not always well-behaved functions~\cite{Sudarshan1963} and quasiprobability representations often obscure the boundary between physical and nonphysical states. Most known bounds are specifically for classical states where the distributions become more regular~\cite{Bohmann}. In contrast, we find properties of the distributions that are satisfied by all states or by all pure states. Similarly, for creation and annihilation operators, where infinite dimensional, rigged Hilbert spaces abound, our results on entanglement generation enable different results that could be valuable as uncertainty-type relations~\cite{Englert2024}.

The paper is structured as follows. In Section~\ref{sec:preliminaries}, we review the basic notions of beam splitters and entanglement, loss channels, quasiprobability distributions and the QCS as a nonclassicality measure. In Section~\ref{sec:ResultsfromPRL}, we report on the main lemmas,  obtained in Ref.~\cite{CompanionShortarXiv}, regarding the convexity of the purity and the entropy as functions of the transmission parameter of the beam splitter. Complementing those results, we prove in Section~\ref{sec:CorrolariesfromPRL} some additional properties for similarity measures under loss.
 In Section~\ref{sec:BeyondCS}, we give some inequalities for creation and annihilation operators in a loss channel that go beyond the Cauchy-Schwarz ones.
In Section~\ref{sec:QCS}, we show how the entanglement results have implications for the QCS and, in particular, we prove that nonclassicality cannot be certifiable with the QCS after 50\% loss.  Finally, in Sections~\ref{sec:InequalityIntegrals} and \ref{sec:CharactersiticFunctions}, we translate those results into terms of $s$-quasiprobability distributions and characteristic functions in  phase space and obtain inequalities that apply to all quantum states.

\section{Preliminary notions}
\label{sec:preliminaries}
\subsection{Beam splitters and loss channel}
A beam splitter, acting on a two-mode state with annihilation operators $a_1$ and $a_2$, is described by the unitary
$B(T)=\e^{\arccos\sqrt{\trans}(a_1 a_2^\dagger-a_1^\dagger a_2)}$ and enacts~\cite{weedbrook} \eq{a_1&\to B(\trans)a_1 B(\trans)^\dagger=\sqrt{\trans}a_1+\sqrt{1-\trans}a_2,\\
a_2&\to B(\trans)a_2 B(\trans)^\dagger=-\sqrt{1-\trans}a_1+\sqrt{\trans}a_2.}
The standard model for optical loss is a beam splitter where the second mode begins in the vacuum state and is ignored after the beam splitter. This is a channel $\mathcal{E}_\trans$ with transmission probability $\trans$ (loss probability $1-\trans$). A relative phase between $a_1$ and $a_2$ imparted
by the beam splitter imparts an optical phase on the transmitted state, which we set to zero without loss
of generality. Hence, a state $\rho$ undergoing loss is  written as
 \eq{\rho_\trans=\mathcal{E}_\trans[\rho_1]=\tr_2[B(T)(\rho_1\otimes\ketbra{0}{0})B^\dag(\trans)].}
No loss is represented by $\trans\!=\!1$ and thus we typically identify the initial state as $\rho_1$. Note that loss acts multiplicatively such that two loss channels with transmissions $\trans_1$ and $\trans_2$ comprise a monolithic loss with $\trans=\trans_1\trans_2$, which is helpful for practical purposes where multiple sources of loss can be considered as one.

The evolution of a lossy state can also be described continuously. It obeys the master equation for a damped harmonic oscillator, which is sometimes presented with other loss parametrizations like $\e^{-\gamma t}=\cos^2(\theta/2)=\eta=\trans$~\cite{NielsenChuang2000} and which provides the derivative
    \eq{
    \frac{\partial \rho_\trans}{\partial \trans}=-\frac{1}{2\trans}(2a\rho_\trans a^\dagger-a^\dagger a\rho_\trans-\rho_\trans a^\dagger a).\label{eq:derivFromMasterEq} }
    One can also describe a lossy state as $\rho_T=~\!\sum_n K_n(T)\rho_1 K_n(T)^\dag$ where \eq{K_n(\trans)=\langle n|_2 B(\trans)|0\rangle_2&=\frac{\left(\sqrt{\frac{1-\trans}{\trans}}a_1\right)^n\sqrt{\trans}^{a_1^\dagger a_1}}{\sqrt{n!}}\\&=\frac{\sqrt{\trans}^{a_1^\dagger a_1}(\sqrt{1-\trans}a_1)^n}{\sqrt{n!}} \label{kraus}}  are the Kraus operators comprising the completely positive, trace-preserving quantum channel $\mathcal{E}_\trans$ \cite{Goldberg2024}.

For example, a Fock state $|n\rangle=a^{\dagger n}|0\rangle/\sqrt
{n!}$ evolves under loss to a binomial distribution of states with definite photon number:
\begin{equation}
    \mathcal{E}_T[|n\rangle\langle n|]=\sum_{k=0}^n \binom{n}{k}T^k(1-T)^{n-k}|k\rangle\langle k|.
    \label{eq:Fock loss binom}
\end{equation}

\subsection{Quasiprobability distributions in phase space}
When using the phase-space approach to quantum mechanics, one cannot define true probability distributions for the quantum state in phase space but, rather, quasiprobability distributions that relax some of Kolmogorov's axioms yet fully represent all of the information present in a quantum state $\rho$. The renowned Glauber-Sudarshan, Wigner, and Husimi distributions for a continuous-variable quantum system are, respectively, the $s=1$, $s=0$, and $s=-1$ cases of the $s$-ordered quasiprobability distributions 
\begin{equation}\label{eq:s-quasi}
    P_\rho(\alpha ,s)=\frac{1}{\pi^2}\int \dd^2\beta \ \e^{s\frac{|\beta |^2}{2}+\beta ^*\alpha -\beta \alpha ^*}\tr[\rho D(\beta )],
\end{equation} 
where $\dd^2\alpha =\dd\Re(\alpha )\dd\Im(\alpha )$, the quadratures are defined as $x=\sqrt{2}\Re(\alpha )$, $p=\sqrt{2}\Im(\alpha )$,  and the displacement operator $D(\beta)=\exp(\beta a^\dagger-\beta^* a)$ acts on coherent states as $D(\beta )|\alpha \rangle=|\alpha +\beta \rangle$. Coherent states $|\alpha\rangle\propto\sum_n (\alpha^n/\sqrt{n!})|n\rangle$ are said to be classical and evolve as $\mathcal{E}_\trans(|\alpha\rangle)=|\trans\alpha\rangle$ under loss.

The $s$-ordered quasiprobability distributions are normalized $\int P_\rho(\alpha ,s)\dd^2\alpha =1$ but not necessarily positive or regular everywhere and do not represent mutually exclusive events for different values of $\alpha$. The value of $s$ can be shifted to any $s + \Delta s \in \mathbb{R}$ via the convolution 
\begin{equation}\label{eq:convolution}
    P_\rho(\alpha ,s + \Delta s)=\frac{2}{\pi |\Delta s|}\int \dd^2\beta \ P_\rho(\beta ,s)\exp\left(\frac{2|\alpha -\beta |^2}{\Delta s}\right) .
\end{equation}

Using the Fourier transform of the displacement operator \cite{CahillGlauber}
\begin{equation}
    \Delta(\alpha ,s)=\frac1{\pi^2}\int\dd^2\beta \  D(\beta  )\e^{s|\beta |^2/2}\e^{\alpha \beta ^*-\alpha ^*\beta },
\end{equation} known as the $s$-ordered operator kernel,
    Eq.~\eqref{eq:s-quasi} can be written as     $P_\rho(\alpha ,s)=\tr[\rho \Delta(\alpha ,s)]$. In addition, the density matrix of the state is also defined as \cite{CahillGlauber}
\begin{equation}
    \rho=\pi\int\dd^2\alpha \  P_\rho(\alpha ,s)\Delta(\alpha ,-s).
\end{equation}
This implies that the
overlap between two states $\tr[\rho\sigma]$ can be written as 
\begin{equation}\label{overlap}
    \tr[\rho\sigma]=\pi\int \dd^2\alpha \ P_\rho(\alpha ,s)P_\sigma(\alpha ,-s).
\end{equation}

The $s$-ordered quasiprobability function of the output of a loss channel, $P_{\rho_\trans}(\alpha ,s)$, can be written as \cite{RadimFilip}
\eq{P_{\rho_\trans}(\alpha ,s)=\frac1{\trans}P_{\rho_1}\left(\frac{\alpha }{\sqrt{\trans}},\frac{s+\trans-1}{\trans}\right),\label{s_quasi_distribution_evolution}}
where $P_{\rho_1}(\alpha ,s)$ is the s-ordered quasiprobability function of the input and is replaced by the input's quasiprobability function of order $s'=(s+\trans-1)/\trans$. The Wigner function corresponds to $s=0$, which implies that the Wigner function evolves to a distribution with $s'\leq-1$ as soon as $\trans\leq\frac12$. All s-ordered quasiprobability distributions with $s\leq-1$ are positive, so the Wigner function  becomes positive after 50\% loss~\cite{RadimFilip}.
This is demonstrated in the upper portion of Fig.~\ref{fig:Fock1}, where a single-photon Fock state is seen to have its Wigner negativity vanish with increasing loss.

\begin{figure}
    \centering
    \includegraphics[width=0.85\linewidth]{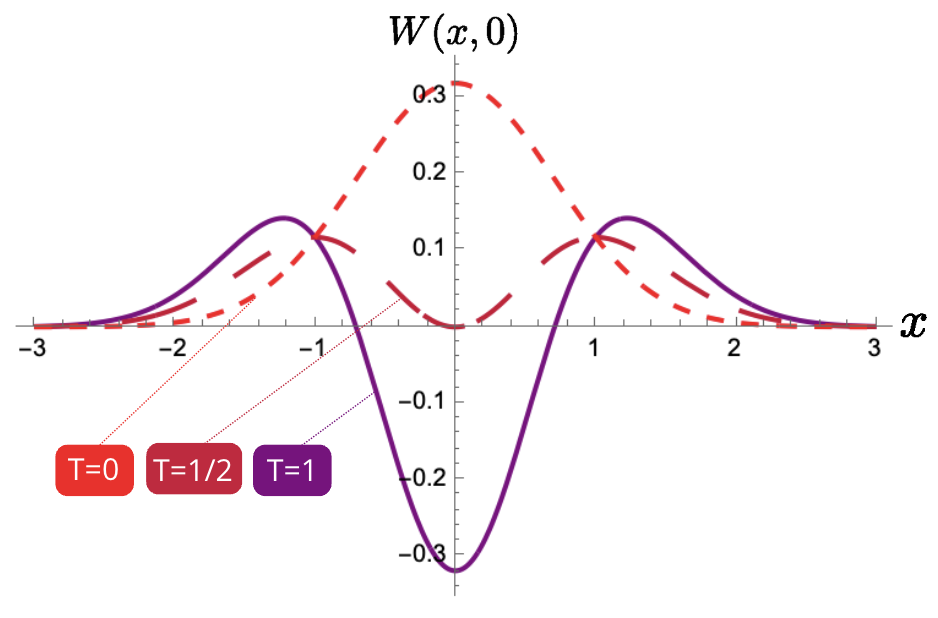}
        \includegraphics[width=0.85\linewidth]{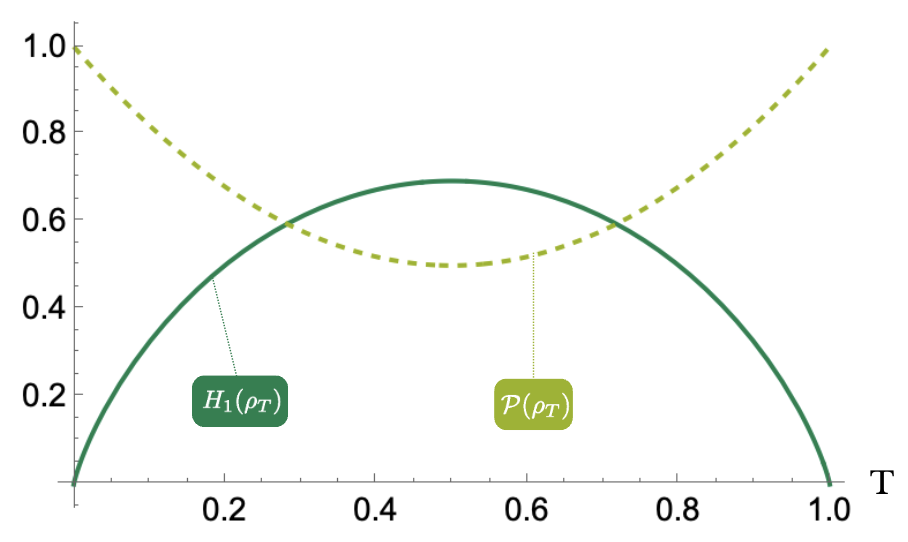}
    \caption{Top: zero-momentum slice of the Wigner function $W(x,0)$ of a Fock state $\ket{1}$ undergoing some loss $\trans=1$ (solid), $1/2$ (long dashed), $0$ (dashed). Bottom: its von Neumann entropy $H_1(\rho_\trans)$ (solid) and purity $\pur(\rho_T)$ (dashed).}
    \label{fig:Fock1}
\end{figure}

Defining  the $s$-ordered displacement operator as
\begin{equation}
    D(\alpha, s) = D(\alpha) e^{s|\alpha|^2/2},
\end{equation}
Eq.~\eqref{eq:s-quasi} can be written as
\begin{equation}
P_\rho(\alpha,s)=\frac1{\pi^2}\int\dd^2\beta\,\e^{\beta^*\alpha-\beta\alpha^*}\chi_\rho(\beta,s),
\end{equation}
where $\chi_\rho(\beta,s)= \tr[\rho D(\alpha, s)]$ is the characteristic function of the state $\rho$ \cite{CahillGlauber} and 
\begin{equation}
\rho = \frac{1}{\pi} \int \dd^2 \alpha\ D(\alpha, s)^{-1} \chi_\rho(\alpha, s) \ .
\end{equation}
The $s$-ordered characteristic function is normalized $\chi_\rho(0, s) = 1$ and its growth is bounded by $|\chi_\rho(\alpha, s)|^2 \leq e^{-s|\alpha|^2}$. 

Similarly to Eq.~\eqref{s_quasi_distribution_evolution}, the $s$-ordered characteristic function of an initial state $\rho_1$ affected by loss in a channel of transmittance $\trans$ is \cite{se06}
\begin{equation}
    \chi_{\rho_T}(\alpha, s) = \chi_{\rho_1} \left (\sqrt{T} \alpha, \frac{s + T - 1}{T} \right ).\label{eq:evolutionofchi}
\end{equation}
Finally, one can also compute the purity of a state $\rho$ with its $s$-ordered characteristic function:
\begin{equation}
    \tr[\rho^2] = \int \frac{\dd^2 \alpha}{\pi}\ e^{-s |\alpha|^2} | \chi_\rho(\alpha, s) |^2 \ . \label{eq:puritywithChi}
\end{equation}

\subsection{Entanglement and loss}
Beamsplitters convert optical nonclassicality into entanglement. After interfering with the vacuum at a beam splitter, all pure or Gaussian states whose Glauber-Sudarshan distributions are negative somewhere will generate two-mode entangled states~\cite{Kimetal2002} and the same is a necessary condition for all mixed input states to generate entanglement~\cite{Xiangbin2002}. These entanglement generation properties have been studied from numerous perspectives over the years~\cite{Tanetal1991,Sanders1992,HuangAgarwal1994,ArvindMukunda1999,Paris1999,Wolfetal2003,Asbothetal2005,Ivanetal2006arxiv,UshaDevietal2006,Tahiraetal2009,Springeretal2009,Lietal2010,Ivanetal2011,Pianietal2011,Ivanetal2012generation,AbdelKhaleketal2012,Jiangetal2013,Berradaetal2013,Killoranetal2014,VogelSperling2014,Monteiroetal2015,Geetal2015,Miranowiczetal2015,Streltsovetal2015,Brunellietal2015,Arkhipovetal2016,RohithSudheesh2016,GholipourShahandeh2016,Maetal2016,Killoranetal2016,BoseKumar2017,Parketal2017,GoldbergJames2018,Kwonetal2019,Tasginetal2020,Fuetal2020,Tserkisetal2020,GoldbergHeshami2021,Ataman2022,Akellaetal2022,Lietal2023,Simonettietal2023,DeepakChatterjee2023,Steinhoff2024,SerranoEnsastigaetal2024arxiv,Lietal2024arxiv}.

Our companion paper~\cite{CompanionShortarXiv} proved the conjecture~\cite{Asbothetal2005} that, among all possible linear optical arrangements, the ones that convert nonclassicality into the most entanglement are all equivalent to a balanced beam splitter, up to optical phases before and after.
For pure states, two-mode entanglement is encoded in the lack of purity of a reduced system. For a pure state interfering with vacuum on a beam splitter, the reduced system of the input mode is equivalent to the input state subject to loss, $\rho_T$. Thus studies of entanglement and loss are intimately connected, as was recognized by the definition of entanglement potential in Ref.~\cite{Asbothetal2005}.

One entanglement measure for two-mode pure states is the mixedness or linear entropy \cite{Manfredi2000} defined as $1-\pur_\psi(\trans)$, where 
\eq{\pur_\psi(\trans)=\pur(\rho_\trans)=\tr[\rho_\trans^2]} 
is the purity of the state undergoing loss. 
Another common entanglement monotone is the entanglement entropy that corresponds to the von Neumann entropy of $\rho_\trans$ \cite{VonNeumann1932}: \eq{\entr_\psi(\trans)=H_{1}(\rho_\trans)=-\tr[\rho_\trans\log \rho_\trans],}
which when linearized in a Mercator series expansion around $1-\rho$ yields the linear entropy. We write this with the notation $H_1$ because it is among the family of R\'enyi entropies \eq{H_\alpha(\rho)=\frac{1}{1-\alpha}\log\tr[\rho^\alpha]} 
that completely determines the eigenvalues and thus the stochastic uncertainty of a state $\rho$. One can see that the mixedness is related to the 2-entropy by $H_2(\rho_T)=-\log \pur(\rho_T)$ and that the entanglement entropy requires a limiting procedure for $\alpha\to 1$. More importantly, all of these entropies $H_\alpha(\rho_T)$ vanish when no entanglement is generated at the beam splitter and are bona fide entanglement measures. Due to their convexity properties, their convex roof extensions are valid entanglement measures for mixed states as well~\cite{Vidal2000,BengtssonZyczkowski2006,Horodeckietal2009}.

The lower portion of Fig.~\ref{fig:Fock1} shows an example of how these quantities evolve for a single-photon Fock state subject to loss, computed by setting $n=1$ in Eq.~\eqref{eq:Fock loss binom}. Then, using this to quantify the two-mode entanglement generated by a single photon input to a beam splitter, we see that the purity decreases monotonically and the von Neumann entropy increases monotonically toward $\trans=1/2$ from both sides, demonstrating that the entanglement between the output ports increases as the beam splitter becomes more balanced.

\subsection{Quadrature coherence scale}
The QCS  of a one-mode state $\rho$ is defined as \cite{Debievre,Hertz}
\begin{equation}\label{eq:QCS}
\Ccal^2(\rho)=\frac1{2\pur(\rho_\trans)}\left(\tr[\rho, X][X,\rho]+\tr[\rho, P][P,\rho]\right), 
\end{equation}
where $X=\frac{a^\dagger+a}{\sqrt2},\, P=\frac{i (a^\dagger-a)}{\sqrt2}$. 
 It quantifies how much coherence a state has in any quadrature basis together with the strength of the coherence, such that superpositions of farther-apart positions are given more weight. This can be better seen by 
 rewriting Eq.~\eqref{eq:QCS} as
\begin{multline}\label{eq:QCS2}
\Ccal^2(\rho)=\frac1{2\pur(\rho_\trans)}\left(\int\rd x\rd x'\ (x-x')^2 |\rho(x,x')|^2\right. +\\ \left.\int \rd p\rd p'\ (p-p')^2|\rho(p,p')|^2 \right).
\end{multline}
Here $\rho(x,x')$ and $\rho(p,p')$ are the operator kernels of $\rho$ in the $X$ and $P$-representations. The QCS is unchanged by rotating the quadratures: an eigenstate of $X$ would have $[X,\rho]$ vanish but not $[P,\rho]$, so the QCS equally weights all quadratures to provide a unified scale for their overall coherence.

An essential feature of the QCS is that it acts as a certifier of nonclassicality. 
A state $\rho$ is said to be classical~\cite{Titulaer}  if it can be written as a mixture of coherent states. In other words, a state is classical if and only if  there exists an everywhere positive Sudarshan-Glauber function $P(\alpha )\equiv P(\alpha ,1)$ defined as\footnote{Note that this decomposition is not unique and a classical state may be described by several $P$-functions with not all of them being positive.}
 	\begin{equation}\label{eq:class-state}
 	\rho=\int \rd \alpha \ P(\alpha ) | \alpha \rangle\langle \alpha | ,
 	\end{equation}
where $\ket{\alpha }$ is a coherent state. 
 	
 Determining if such a $P$-function exists  is known  to be a challenging problem, resulting in the development of many nonclassicality measures, such as the distance to the set of classical states \cite{Hillery1} or witnesses such as the negativity of the Wigner functions~\cite{Kenfack}, among many others~\cite{Bach,Hillery3,Lee, Agarwal2,Lutkenhaus, Dodonov, Marian,Richter, Ryl,Sperling,Killoran,Alexanian,Nair,Ryl2, Yadin, Kwon2,  Takagi18, Horoshko, Luo, Bohmann,Tan2020}. Recently, it was shown that the QCS is a certifier of nonclassicality~\cite{Hertz}: $\Ccal>1$ certifies that a state is nonclassical. The QCS also has the advantage that it witnesses the nonclassicality of squeezed states while Wigner negativity does not. In addition, while the QCS is not itself a distance measure, it gives a good estimate for distances via $\Ccal-1\leq d(\rho, C_{cl})\leq \Ccal$ where $d(\rho, C_{cl})$, is the distance to the set of classical states $C_{cl}$ \cite{Hillery1}.

\section{Theorems from companion paper}
\label{sec:ResultsfromPRL}

To prove maximum entanglement at balanced beam splitter in Ref.~\cite{CompanionShortarXiv}, we considered various entanglement monotones. We thus proved several theorems for the purity and its derivatives that have more corollaries beyond maximizing entanglement. We state here the relevant ones and refer the interested reader to our companion paper for more details~\cite{CompanionShortarXiv}.

The following theorems and lemmas show that the purity of a pure state $\ket{\psi}$ undergoing loss is convex in the beam splitter transmission probability $\trans$ and symmetric about $\trans\!=\!1/2$, which implies that $\frac{\partial \mathcal{P}_\psi(\trans)}{\partial\trans}$ is negative (positive) when $\trans<1/2$ ($\trans>1/2$) and equal to 0 at $\trans=1/2$. 
:
\begin{lem}[Symmetry of purity (Lemma 1 in Ref.~\cite{CompanionShortarXiv})]\label{lem:sym}
A pure state subject to loss has the same purity for $\trans \leftrightarrow 1\!-\!\trans$; i.e., $\pur_\psi(T)=\pur_\psi(1\!-\!T)$. Therefore, there is a local extremum at $\trans\!=\!1/2$ and, thus, if the derivative is always nonnegative on one side of $\trans\!=\!1/2$ it is always nonpositive on the other side.
\end{lem}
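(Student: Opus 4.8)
\emph{Proof proposal.} The plan is to realize photon loss as one output arm of a beam splitter and exploit the Schmidt symmetry of bipartite pure states. Write the two-mode output $\ket{\Psi_\trans}=B(\trans)(\ket{\psi}_1\otimes\ket{0}_2)$; by the definition of the loss channel its reduced state on mode $1$ is $\rho_\trans$, so that $\pur_\psi(\trans)=\tr[\rho_\trans^2]$. The identity I would establish is that, up to an irrelevant global phase,
\eq{\ket{\Psi_{1-\trans}}=S\ket{\Psi_\trans},}
where $S$ is the unitary swapping the two modes. Granting this, tracing out mode $2$ shows that $\rho_{1-\trans}$ is unitarily equivalent to the reduced state of mode $2$ of $\ket{\Psi_\trans}$, which by the Schmidt decomposition is isospectral to the reduced state of mode $1$, namely $\rho_\trans$. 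Isospectral density operators have equal purity, so $\pur_\psi(\trans)=\tr[\rho_{1-\trans}^2]=\pur_\psi(1-\trans)$; the remaining assertions then follow by differentiating this identity, which gives $\pur_\psi'(\trans)=-\pur_\psi'(1-\trans)$, hence a critical point at $\trans=1/2$ and an automatic sign flip of the derivative across $\trans=1/2$.

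To prove the displayed identity I would work in the Heisenberg picture. Using $\arccos\sqrt{1-\trans}=\pi/2-\arccos\sqrt{\trans}$, the common generator $g=a_1a_2^\dagger-a_1^\dagger a_2$ factorizes the relevant unitaries as $B(1-\trans)=B(0)\,B(\trans)^\dagger$. From the mode transformations quoted in Sec.~\ref{sec:preliminaries} one reads off $B(0)=SU_2$ with $U_2=\e^{i\pi a_2^\dagger a_2}$ the $\pi$-phase rotation on mode $2$ (both enact $a_1\mapsto a_2$, $a_2\mapsto-a_1$ and fix the two-mode vacuum), while $U_2gU_2^\dagger=-g$ yields $B(\trans)^\dagger=U_2B(\trans)U_2^\dagger$. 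Since $U_2$ leaves the mode-$2$ vacuum invariant and $U_2^2=\bbone$, the $U_2$ factors cancel when acting on $\ket{\psi}_1\otimes\ket{0}_2$, collapsing $B(1-\trans)(\ket{\psi}_1\otimes\ket{0}_2)$ to $S\,B(\trans)(\ket{\psi}_1\otimes\ket{0}_2)$. Equivalently, and more physically: the second output mode of the beam splitter is $-\sqrt{1-\trans}\,a_1+\sqrt{\trans}\,a_2$, which is precisely the loss channel $\mathcal{E}_{1-\trans}$ acting on the $a_1$ input with auxiliary vacuum $a_2$, up to the overall sign $-1$ (a single-mode phase), so mode $2$'s reduced state is $\rho_{1-\trans}$ up to a unitary.

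The step I expect to be the main obstacle is purely the bookkeeping of the second paragraph: correctly tracking the swap and single-mode phase factors relating $B(\trans)$, $B(1-\trans)$ and $B(0)$, and verifying that each of them acts trivially on the vacuumed auxiliary mode so that the relation collapses to a bare mode swap. The remainder --- the Schmidt argument, equality of purities for isospectral states, and differentiating $\pur_\psi(\trans)=\pur_\psi(1-\trans)$ --- is routine, with differentiability of $\pur_\psi$ in $\trans$ guaranteed e.g.\ by Eq.~\eqref{eq:derivFromMasterEq}.
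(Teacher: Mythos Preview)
Your argument is correct and rests on the same core idea the companion paper uses: the two reduced states of the bipartite pure state $B(\trans)(\ket{\psi}\otimes\ket{0})$ are swapped under $\trans\leftrightarrow 1-\trans$, so Schmidt symmetry forces equal spectra and hence equal purity. The only difference is packaging. The companion paper encodes the bipartite state in the matrix $X(\trans)=\sum_n K_n(\trans)\ket{\psi}\bra{n}$ and proves the transpose identity $X(\trans)=X(1-\trans)^T$ directly from the Kraus operators in the Fock basis (this is the ``transpose trick'' quoted around Eq.~\eqref{eq:transpose trick}); your route instead factorizes the beam splitter unitary as $B(1-\trans)=S\,B(\trans)\,U_2^\dagger$ and kills $U_2^\dagger$ on the vacuum. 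The two statements are equivalent---$X(\trans)=X(1-\trans)^T$ is exactly the component form of $\ket{\Psi_{1-\trans}}=S\ket{\Psi_\trans}$---so neither buys anything the other doesn't, but the transpose form is what the paper reuses downstream (e.g.\ in deriving Eq.~\eqref{eq:transpose trick}), so it is worth knowing that your swap identity and their transpose identity are the same fact.
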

\begin{thm}[Convexity of purity (Theorem 7 in Ref.~\cite{CompanionShortarXiv})]\label{thm:pur}
    The purity of a pure state subject to loss $\pur_\psi(\trans)$ is convex in $\trans \in \mathbb{R}$; $\partial ^2 \pur_\psi(\trans)/\partial \trans^2\geq~0$.
\end{thm}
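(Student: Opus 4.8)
The plan is to push the entire $\trans$-dependence of the purity into a single Gaussian weight and then differentiate under the integral sign. Setting $s=0$ in Eq.~\eqref{eq:puritywithChi} and feeding in the loss law Eq.~\eqref{eq:evolutionofchi} gives, with $\chi_\psi(\beta):=\bra{\psi}D(\beta)\ket{\psi}$ the symmetric-ordered characteristic function of the input,
\begin{equation}
\pur_\psi(\trans)=\frac1{\pi\trans}\int\dd^2\beta\;e^{(1-1/\trans)|\beta|^2}\,|\chi_\psi(\beta)|^2 .
\end{equation}
I would then reparametrize by $t:=1/\trans-1$, so that $\trans=(1+t)^{-1}$ and $t\in(0,\infty)$ covers $\trans\in(0,1)$; this separates the variables cleanly as $\pur_\psi(\trans)=(1+t)\,g(t)$ with
\begin{equation}
g(t)=\frac1\pi\int\dd^2\beta\,e^{-t|\beta|^2}\,|\chi_\psi(\beta)|^2=\int_0^\infty e^{-t\lambda}\,F(\lambda)\,\dd\lambda ,
\end{equation}
where $F(\lambda)=\tfrac1{2\pi}\int_0^{2\pi}|\chi_\psi(\sqrt\lambda\,e^{i\theta})|^2\,\dd\theta\ge0$ is the angular average and $\int_0^\infty F\,\dd\lambda=\tfrac1\pi\int|\chi_\psi|^2\,\dd^2\beta=\tr[(\ketbra{\psi}{\psi})^2]=1$. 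Thus $g$ is the Laplace transform of a probability density on $[0,\infty)$, hence completely monotone.

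Next I would differentiate twice, using $\dd t/\dd\trans=-(1+t)^2$ and $\dd^2 t/\dd\trans^2=2(1+t)^3$. A short chain-rule computation gives
\begin{equation}
\frac{\partial^2\pur_\psi(\trans)}{\partial\trans^2}=(1+t)^3\big[\,2g(t)+4(1+t)g'(t)+(1+t)^2g''(t)\,\big],
\end{equation}
and substituting $g^{(k)}(t)=(-1)^k\int_0^\infty\lambda^k e^{-t\lambda}F(\lambda)\,\dd\lambda$ collapses the bracket to $\int_0^\infty e^{-t\lambda}\big[(s-2)^2-2\big]F(\lambda)\,\dd\lambda$ with $s:=(1+t)\lambda\ge0$. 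Since $(1+t)^3>0$, the theorem reduces to the claim that this integral is nonnegative for every $t\ge0$; convexity on $(0,1)$ together with the reflection symmetry of Lemma~\ref{lem:sym} then gives the applications, and the extension to all $\trans\in\R$ is handled as in Ref.~\cite{CompanionShortarXiv}.

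The hard part — really the whole content of the statement — is exactly this last inequality, because $(s-2)^2-2<0$ for $s\in(2-\sqrt2,\,2+\sqrt2)$, so the integrand is not pointwise nonnegative and complete monotonicity of $g$ alone does not suffice: a measure concentrated where $(1+t)\lambda\approx2$ would violate it, but is never realized by a genuine characteristic function. The positivity must use the constraint that $|\chi_\psi|^2$ is the squared modulus of the characteristic function of a \emph{pure} state. One route is the Fock expansion $|\chi_\psi(\beta)|^2=e^{-|\beta|^2}\sum_{d\ge0}(2-\delta_{d0})\,|\beta|^{2d}\,|G_d(|\beta|^2)|^2$, where $G_d$ is the combination of associated Laguerre polynomials determined by the amplitudes $\braket{n}{\psi}$; carrying out the angular integral and integrating by parts the resulting one-dimensional integrals reduces the claim to a family of inequalities indexed by $d$ that the amplitude sequence must satisfy. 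Establishing those per-$d$ bounds — where the pure-state hypothesis is genuinely consumed — is the technical heart of the result and the step I expect to be the main obstacle; it is carried out in the companion paper~\cite{CompanionShortarXiv}, which I would invoke at that point.
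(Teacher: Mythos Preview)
Your reduction via the characteristic function is correct and in fact reappears in Section~\ref{sec:CharactersiticFunctions} of the paper, where the second derivative of $\pur$ is written exactly as an integral against the sign-indefinite weight $|\alpha|^4-4|\alpha|^2\trans+2\trans^2$, and the paper explicitly notes that the integrand is nonpositive on the annulus $\trans(2-\sqrt2)\le|\alpha|^2\le\trans(2+\sqrt2)$ without attempting to deduce convexity from this formula. So you have rederived a known reformulation, not a proof. The step where you propose a Fock/Laguerre expansion of $|\chi_\psi|^2$ and unspecified ``per-$d$ bounds'' is speculative, and invoking Ref.~\cite{CompanionShortarXiv} for those bounds is circular: what that reference proves is precisely Theorem~\ref{thm:pur}, and it does so by an entirely different method, not via analytic control of the characteristic function.

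The route actually taken---encoded in Lemma~\ref{lem:HS positive} and Eq.~\eqref{purityDarkPort}---is both different and much shorter. The two-copy identity $\pur(\rho_\trans)=\tr[\rho_1\otimes\rho_1\,\lambda^{N_-}]=\sum_{m\ge0}p_m\lambda^m$ with $\lambda=1-2\trans$ and $p_m\ge0$ expresses the purity as a power series in $\lambda$ with nonnegative coefficients. For a pure input, Lemma~\ref{lem:sym} gives $\pur_\psi(\trans)=\pur_\psi(1-\trans)$, i.e.\ invariance under $\lambda\mapsto-\lambda$, which forces the series to be even: $p_m=0$ for all odd $m$. Then
\[
\frac{\partial^2\pur_\psi}{\partial\trans^2}=4\sum_{\substack{m\ \mathrm{even}\\ m\ge2}}m(m-1)\,p_m\,\lambda^{m-2}\ge0
\]
for every real $\lambda$, since each surviving power $\lambda^{m-2}$ is even and hence nonnegative. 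Convexity on all of $\R$ follows immediately, with the pure-state hypothesis consumed by the symmetry lemma rather than by any estimate on $|\chi_\psi|^2$. Your characteristic-function integral then becomes a \emph{corollary} of convexity, not a route to it---which is exactly how the paper uses it in Section~\ref{sec:CharactersiticFunctions}.
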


The purity can be seen as a special case of the Hilbert-Schmidt inner product $O_\trans=\tr[\sigma_\trans^\dagger \rho_\trans]$ that measures the overlap between any two operators $\sigma_\trans$ and $\rho_\trans$. For this more general quantity, we proved the following Lemma. In the next section, this will allow us to analyze the derivative of the purity for a mixed state.

\begin{lem}
    [Positive polynomial expansion of Hilbert-Schmidt norm (Lemma 6 in Ref.~\cite{CompanionShortarXiv})]\label{lem:HS positive} The overlap between any two positive operators $\rho_1$ and $\sigma_1$ subject to loss $\trans$
    is a polynomial in $\cent\equiv (1-2\trans)$ with nonnegative coefficients.
\end{lem}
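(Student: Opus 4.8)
The plan is to recast $O_\trans=\tr[\sigma_\trans\rho_\trans]$ as a manifestly nonnegative $\cent$-expansion of a single two-mode object. First I would reduce to rank-one inputs: $O_\trans$ is linear in $\rho_1$ and in $\sigma_1$ separately, and every positive operator is a sum of rank-one projectors with nonnegative weights, so a polynomial-with-nonnegative-$\cent$-coefficients statement for $\rho_1=\ket\psi\bra\psi$, $\sigma_1=\ket\phi\bra\phi$ upgrades immediately to all positive inputs. For pure inputs, the Kraus form~\eqref{kraus} gives $O_\trans=\sum_{k,l\ge0}\bigl|\bra{\psi}K_k(\trans)^\dagger K_l(\trans)\ket{\phi}\bigr|^2$, whose summands are elementary in the Fock basis. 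I would then resum over $(k,l)$ with a coherent-state (Bargmann) generating function: write each squared modulus as a Gaussian integral over two phase-space variables, exchange the sum with the integral, and perform the Gaussian integrations; the result is a closed form $O_\trans=\bra{\psi}\!\otimes\!\bra{\phi}\,\widehat{\mathcal O}(\cent)\,\ket{\psi}\!\otimes\!\ket{\phi}$ for a two-mode operator $\widehat{\mathcal O}(\cent)$ (morally the image of the swap under $\mathcal E_\trans^\dagger\otimes\mathcal E_\trans^\dagger$) in which $\cent=1-2\trans$ enters only through the loss data — $\trans=(1-\cent)/2$ in the attenuation factors $\trans^{a^\dagger a}$ and $1-\trans=(1+\cent)/2$ in the photon-subtraction weights of $\widehat P=\sum_n\tfrac{(1-\trans)^n}{n!}a^n\rho_1 a^{\dagger n}$ and its $\sigma$-counterpart $\widehat\Sigma$, organized by the compact identity $O_\trans=\tr[\trans^{a^\dagger a}\widehat P\,\trans^{a^\dagger a}\widehat\Sigma]$ with $\widehat P,\widehat\Sigma\succeq0$.

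The crux is then to show $\widehat{\mathcal O}(\cent)=\sum_{k\ge0}\cent^k\widehat{\mathcal O}_k$ with each $\widehat{\mathcal O}_k$ nonnegative on product states (equivalently, by Taylor's theorem, that each $\partial_\cent^k\widehat{\mathcal O}(\cent)\big|_{\cent=0}$ is so); granting this, $O_\trans=\sum_k\cent^k\bra{\psi\phi}\widehat{\mathcal O}_k\ket{\psi\phi}$ has the claimed form and the reduction above finishes the lemma. The technical workhorse I expect to use here is the identity $\tr[AXAY]=\tr[(A^{1/2}XA^{1/2})(A^{1/2}YA^{1/2})]\ge0$ for positive $A,X,Y$ — it already yields $O_\trans\ge0$ directly from $O_\trans=\tr[\trans^{a^\dagger a}\widehat P\,\trans^{a^\dagger a}\widehat\Sigma]$ — combined with the master-equation form~\eqref{eq:derivFromMasterEq}, which converts the $\cent$-derivatives of $O_\trans$ into a recursion built from commutators with $a$ and $a^\dagger$, the positivity of which can be propagated inductively.

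The main obstacle is exactly this positivity bookkeeping. The transmission $\trans$ sits simultaneously in the beam-splitter ``reflection'' weights $(1-\trans)^n$, where it contributes nonnegative $\cent$-coefficients, and in the ``attenuation'' factors $\trans^{a^\dagger a}$, where $(1-\cent)^{a^\dagger a}$ contributes alternating ones; so neither piece of the expansion is termwise nonnegative, and the cancellations that restore nonnegativity show up only after the two are combined. Making that structure explicit — ideally by exhibiting $\widehat{\mathcal O}(\cent)$ in a form whose $\cent$-expansion is termwise a sum of $M_k^\dagger M_k$, or a sum of $N\,(\cdot)\,N^\dagger$ with fixed $N$ — is where the real work lies. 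One could instead sidestep the two-mode operator via the phase-space identity $O_\trans=\int\!\!\int\dd^2\beta\,\dd^2\gamma\;P_{\rho_1}(\beta)P_{\sigma_1}(\gamma)\,\e^{-\trans|\beta-\gamma|^2}$ and expand $\e^{-\trans|\beta-\gamma|^2}=\e^{-|\beta-\gamma|^2/2}\sum_k\tfrac{(\cent/2)^k}{k!}|\beta-\gamma|^{2k}$, so the $\cent^k$-coefficient is $\tfrac1{2^k k!}\int\!\!\int P_{\rho_1}P_{\sigma_1}\,|\beta-\gamma|^{2k}\e^{-|\beta-\gamma|^2/2}$; but this integral is not nonnegative for generic real $P_{\rho_1},P_{\sigma_1}$ (the kernel's Fourier transform changes sign for $k\ge1$), so proving nonnegativity for all physical $\rho_1,\sigma_1$ again forces one to invoke the positivity constraints the characteristic functions inherit from $\rho_1,\sigma_1\succeq0$ — the same combinatorial heart of the matter, viewed on the Fourier side.
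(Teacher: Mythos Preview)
Your reductions are sound and you have correctly identified $(\mathcal E_\trans^\dagger\otimes\mathcal E_\trans^\dagger)[\hat S]$ as the two-mode operator whose $\cent$-expansion must be shown termwise positive. But the proposal stops at the hard step: none of your three tools (Bargmann resummation, master-equation recursion on $\cent$-derivatives, $P$-function expansion) actually produces the required positivity, and you say as much. The inductive argument via Eq.~\eqref{eq:derivFromMasterEq} in particular does not close, because each derivative introduces terms of both signs with no evident majorization between them; the phase-space route you also correctly diagnose as circular.

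The missing idea is the beam-splitter commutation trick sketched around Eq.~\eqref{eq:polynomPurity} and Fig.~\ref{fig:proofPurity}. Write SWAP as a balanced beam splitter conjugating dark-port parity, $\hat S=B(\tfrac12)\bigl(\bbone\otimes(-1)^{N_-}\bigr)B(\tfrac12)^\dagger$, and use that equal loss on both arms commutes with any beam splitter. This pushes $\mathcal E_\trans^\dagger\otimes\mathcal E_\trans^\dagger$ through $B(\tfrac12)$: on the bright port it meets $\bbone$, and on the dark port a one-line Fock computation gives $\mathcal E_\trans^\dagger[(-1)^N]=\cent^{N}$. Thus
\[
O_\trans=\sum_{m\ge0}\cent^{m}\,\tr\!\bigl[(\rho_1\otimes\sigma_1)\,B(\tfrac12)(\bbone\otimes\ketbra{m}{m})B(\tfrac12)^\dagger\bigr],
\]
where each coefficient is the trace of a product of two nonnegative operators. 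This is exactly the $\sum_k\cent^k\widehat{\mathcal O}_k$ decomposition you were seeking, with $\widehat{\mathcal O}_k$ a projector; your phase-space coefficient $\tfrac{1}{2^k k!}\iint P_{\rho_1}P_{\sigma_1}|\beta-\gamma|^{2k}\e^{-|\beta-\gamma|^2/2}$ equals the $m=k$ term, so its nonnegativity for physical states is a consequence rather than an input.
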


It is well known that purity can be experimentally determined by preparing two identical copies of a state and then measuring the photon number statistic $p_{n_-}$ on the  ``difference'' mode---or the dark port---that is the mode associated to $a_-=\frac1{\sqrt{2}}(a_1-a_2)$. 
 The purity is then given by the mean photon-number parity: $\sum_{n_-} (-1)^{n_-}p_{n_-}$\cite{Daleyetal2012,Islametal2015}.
 With this picture in mind, Lemma~\ref{lem:HS positive} can be understood as follows: if a state undergoes some loss $\trans$, the loss can be moved on the other side of the beam splitter (as for all linear optical networks~\cite{Goldberg2024}). The purity is obtained with the same scheme, only we now measure $(1-2\trans)^{N_-}$, with $N_-=~a_-^\dag a_-$ being the number operator associated to the dark port. This is summarized in Figure~\ref{fig:proofPurity}.

 The actual expression for the purity is then
\begin{equation}\label{eq:polynomPurity}
        \pur(\rho_\trans)\!=\!\sum_{m\geq 0}\!\lambda^m \tr[({\rho_1}\otimes{\rho_1}) B(\tfrac{1}{2})(\openone \!\otimes\!|m\rangle\langle m|)B(\tfrac{1}{2})^\dagger],
    \end{equation}
or more simply   \begin{equation}\label{purityDarkPort}
\pur(\rho_\trans)\!=\!\tr[{\rho}\otimes{\rho}\,\lambda^{N_-} ].
    \end{equation}
This expression is useful as it allows us to compute the exact value of the minimum purity. Indeed, since the minimium is obtained at $\lambda=0$ (per Lemma~\ref{lem:sym} and Theorem~\ref{thm:pur}), we only need to keep the 0 power of the polynomial, which is equivalent to sending two copies of the state to a balanced beam splitter and projecting the second mode onto the vacuum. When the initial state is a pure state $\ket{\psi}=\sum_j \psi_j\ket{j}$, minimum purity is equivalent to maximal entanglement $1\!-\! \pur(\rho_\trans)_{\textrm{min}}$ and Eq.~\eqref{eq:polynomPurity} implies that the optimal value is (see Appendix~\ref{Appendix:purity Fock} for the expression for all $\trans$)
\eq{ \pur(\rho_\trans)_{\textrm{min}}=\!\sum_{n,k,k'}\frac{\psi_k\psi_{k'}^*\psi_{n-k}\psi_{n-k'}^*}{2^n}\sqrt{\binom{n}{k}\binom{n}{k'}}.}

\begin{figure*}
    \centering
    \includegraphics[width=0.99\textwidth]{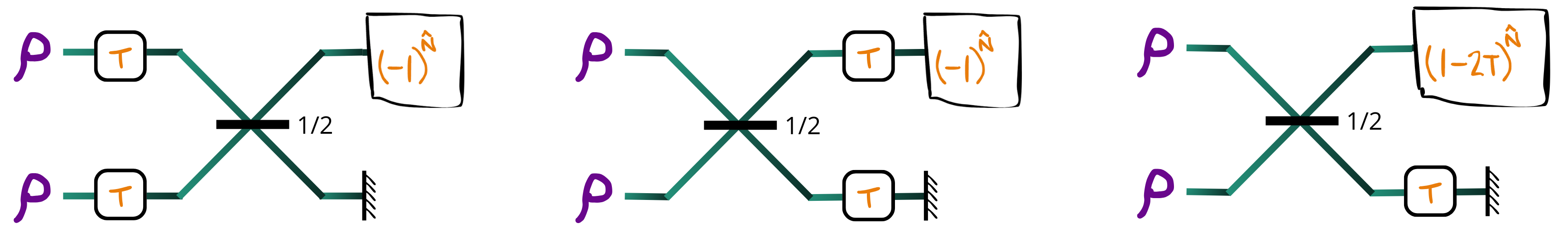}
    \caption{Two-copy circuit used to measure the purity of a state $\rho$. The left scheme is the usual setup used to measure purity using two copies of a state~\cite{Daleyetal2012, Islametal2015}, the other two are equivalent setups when the same loss $\trans$ is applied on both copies of the state and are essential to proving the convexity properties of lossy states and the entanglement-generation properties of beam splitters.}
    \label{fig:proofPurity}
\end{figure*}

Moving from linear entropies to entanglement entropies, we proved that $H_{1}(\rho_\trans)=H_{1}(\rho_{1-\trans})$ and that the von Neumann entropy is concave:
\begin{thm}[Concavity of entropy (Theorem 4 in Ref.~\cite{CompanionShortarXiv})]\label{thm:entr}
    The von Neumann entropy of a state subject to loss is concave in $\trans$ for $\trans \in [0, 1]$; $\partial ^2 H_1(\rho_\trans)/\partial \trans^2\leq~0$.
\end{thm}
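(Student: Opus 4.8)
The plan is to reduce concavity of $H_1(\rho_\trans)$ to a convexity statement for traces of powers of $\rho_\trans$. Using the replica identity $H_1(\rho_\trans)=\lim_{q\to 1^+}\frac{1-\tr[\rho_\trans^{\,q}]}{q-1}$ and the fact that a pointwise limit of concave functions is concave, it suffices to show that $\trans\mapsto\tr[\rho_\trans^{\,q}]$ is convex on $[0,1]$ for every real $q$ in some interval $(1,1+\delta)$. For integer $q=n\ge 2$ this follows from the same circuit identity that underlies Lemma~\ref{lem:HS positive} and Theorem~\ref{thm:pur}: write $\tr[\rho_\trans^{\,n}]=\tr[\rho_1^{\otimes n}S_n]$ with $S_n$ the cyclic shift, diagonalise $S_n$ with the $n$-mode Fourier interferometer $U_n$, note that a uniform loss $\trans$ on all $n$ copies commutes with $U_n$, and use that the damping channel acts on number-diagonal observables as $\mathcal{E}_\trans^\dagger[e^{i\phi a^\dagger a}]=(1-\trans+\trans e^{i\phi})^{a^\dagger a}$. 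This gives $\tr[\rho_\trans^{\,n}]=\sum_{\vec m}c_{\vec m}\prod_{k=0}^{n-1}\mu_k(\trans)^{m_k}$ with $\mu_k(\trans)=1-\trans\bigl(1-e^{2\pi i k/n}\bigr)$ affine in $\trans$ and $c_{\vec m}=\langle\vec m|U_n\rho_1^{\otimes n}U_n^\dagger|\vec m\rangle\ge 0$; the $n=2$ instance is exactly $\pur(\rho_\trans)=\tr[\rho_1\otimes\rho_1\,\lambda^{N_-}]$, and for pure inputs the result is, as in Lemma~\ref{lem:sym}, invariant under $\lambda\equiv 1-2\trans\mapsto-\lambda$, hence a nonnegative-coefficient power series in $(1-2\trans)^2$, each term of which is convex in $\trans$.

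The non-integer $q$ is where a new ingredient is needed, and I would funnel it through one reduction. With $x^{q}=\frac{\sin[\pi(q-1)]}{\pi}\int_0^\infty s^{q-2}\,\frac{x^2}{x+s}\,ds$ for $q\in(1,2)$ and $\tr\!\bigl[\rho_\trans^2(\rho_\trans+s)^{-1}\bigr]=1-s\,\tr\!\bigl[\rho_\trans(\rho_\trans+s)^{-1}\bigr]$, convexity of $\trans\mapsto\tr[\rho_\trans^{\,q}]$ for all such $q$ follows from the single claim that $\trans\mapsto\tr\!\bigl[\rho_\trans(\rho_\trans+s)^{-1}\bigr]$ is concave for each fixed $s>0$; and that same claim yields the theorem directly through $-\log x=\int_0^\infty\!\bigl[(x+s)^{-1}-(1+s)^{-1}\bigr]ds$, i.e. $H_1(\rho_\trans)=\int_0^\infty\!\bigl(\tr[\rho_\trans(\rho_\trans+s)^{-1}]-\tfrac1{1+s}\bigr)ds$. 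As a consistency check, since $\langle a^\dagger a\rangle_{\rho_\trans}=\trans\langle a^\dagger a\rangle_{\rho_1}$ is affine in $\trans$, concavity of $H_1(\rho_\trans)$ is equivalent to convexity in $\trans$ of the relative entropy $D(\rho_\trans\|\tau)$ to any fixed thermal reference state $\tau$.

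The main obstacle is exactly that concavity claim, and the reason it is hard is structural: $H_1$ is \emph{not} a nonnegative combination of the $\tr[\rho_\trans^{\,q}]$, because $x\mapsto-x\log x$ and $x\mapsto x/(x+s)$ are operator \emph{concave} rather than convex, so the positive-coefficient-polynomial mechanism of Lemma~\ref{lem:HS positive} cannot be invoked for it. I would instead differentiate twice along the master equation~\eqref{eq:derivFromMasterEq}: with $\partial_\trans\rho_\trans=-\tfrac1{2\trans}\mathcal{L}[\rho_\trans]$ and $\mathcal{L}[\rho]=2a\rho a^\dagger-a^\dagger a\rho-\rho a^\dagger a$, one obtains $\partial_\trans^2 H_1(\rho_\trans)=-\tr\!\bigl[\partial_\trans^2\rho_\trans\,\log\rho_\trans\bigr]-\langle\partial_\trans\rho_\trans,\partial_\trans\rho_\trans\rangle^{\mathrm{BKM}}_{\rho_\trans}$, where the Bogoliubov--Kubo--Mori quadratic form $\langle A,A\rangle^{\mathrm{BKM}}_{\rho}=\int_0^\infty\tr[A(\rho+s)^{-1}A(\rho+s)^{-1}]\,ds\ge 0$ is manifestly nonnegative. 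Because $\partial_\trans^2\rho_\trans$ contains a term proportional to $\mathcal{L}^2[\rho_\trans]$, the remaining task is the operator inequality $-\tr\!\bigl[\partial_\trans^2\rho_\trans\,\log\rho_\trans\bigr]\le\langle\partial_\trans\rho_\trans,\partial_\trans\rho_\trans\rangle^{\mathrm{BKM}}_{\rho_\trans}$, in which the explicit structure of the damping generator — and, plausibly, the symmetry of Lemma~\ref{lem:sym} together with the purity convexity of Theorem~\ref{thm:pur} already in hand — must be leveraged to dominate the second-order term. Subtleties of convergence of the replica limit and of the integral representations in infinite dimensions are routine by comparison.
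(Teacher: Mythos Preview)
Your proposal is a sketch, not a proof, and you say so yourself: the decisive step is the operator inequality
\[
-\tr\!\bigl[\partial_\trans^2\rho_\trans\,\log\rho_\trans\bigr]\ \le\ \langle\partial_\trans\rho_\trans,\partial_\trans\rho_\trans\rangle^{\mathrm{BKM}}_{\rho_\trans},
\]
which you leave as ``the remaining task.'' The BKM term indeed has the right sign, but the cross term $-\tr[\partial_\trans^2\rho_\trans\,\log\rho_\trans]$ has no a priori sign: $-\log\rho_\trans\succeq 0$ while $\partial_\trans^2\rho_\trans$ is indefinite (it is a second derivative of a trace-preserving family, hence traceless but not sign-definite). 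Your hope that Lemma~\ref{lem:sym} or Theorem~\ref{thm:pur} can be ``leveraged'' here is not substantiated: those results are quadratic in $\rho_\trans$ and rest on the positive-coefficient polynomial structure of Lemma~\ref{lem:HS positive}, whereas the quantity you must control involves $\log\rho_\trans$, which is fully spectral and precisely outside the reach of that mechanism --- as you yourself note when observing that $-x\log x$ and $x/(x+s)$ are operator \emph{concave}. So the reduction to $\trans\mapsto\tr[\rho_\trans(\rho_\trans+s)^{-1}]$ being concave is correct but circular: that claim is essentially as hard as the theorem.

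A secondary issue: your $n$-copy identity $\tr[\rho_\trans^{\,n}]=\sum_{\vec m}c_{\vec m}\prod_k\mu_k(\trans)^{m_k}$ is correct and nice, but for $n\ge 3$ the $\mu_k$ are complex affine functions and you do not argue that the real polynomial you obtain is convex in $\trans$; you only justify $n=2$ for pure inputs via the $\lambda\mapsto-\lambda$ symmetry. Even granting all integers, you still need non-integer $q$ near $1$, which brings you back to the unproven inequality above.

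This paper does not give a proof of Theorem~\ref{thm:entr} --- it is imported from the companion paper --- so a line-by-line comparison is not possible here. But note the contrast in style: purity and Hilbert--Schmidt overlaps (Theorem~\ref{thm:pur}, Lemma~\ref{lem:HS positive}) are handled by explicit positive-coefficient expansions, whereas the von~Neumann entropy statement is packaged separately, consistent with it requiring a genuinely different, information-theoretic argument (think data processing for relative entropy or strong subadditivity applied to the two-output beam-splitter picture $\omega_\trans=B(\trans)(\rho_1\otimes\ketbra{0}{0})B(\trans)^\dagger$) rather than the analytic route you attempt. Your relative-entropy reformulation $\partial_\trans^2 H_1(\rho_\trans)=-\partial_\trans^2 D(\rho_\trans\|\tau)$ is a good instinct in that direction, but you have not yet turned it into an inequality.
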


\section{Corollaries for similarity measures}
\label{sec:CorrolariesfromPRL}

In the previous section, we revisited the principal results established in Ref.~\cite{CompanionShortarXiv}. In the process of developing those proofs, we also arrived at several noteworthy supplementary findings for similarity measures. For example, the next corollary shows that not only the von Neumann entropy of $\rho_\trans$ is concave in $\trans$, but also the mutual information between the two outputs of the beam splitter ($\rho_\trans$ corresponds to one output, while instead tracing out the first mode gives the second output).
\begin{corollary}
    [Mutual information] The quantum mutual information between the two outputs of a beam splitter with a vacuum input is concave in $\trans$.
\end{corollary}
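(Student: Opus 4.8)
The plan is to reduce the claim to Theorem~\ref{thm:entr} by expressing the mutual information in terms of the von Neumann entropies of the two single-mode output marginals and of the joint output, and then showing that each marginal is itself a loss channel applied to the beam-splitter input. Write $\rho_1$ for the (possibly mixed) state entering the first port, with the vacuum in the second port, so the joint output is $\sigma_\trans = B(\trans)(\rho_1\otimes\ketbra{0}{0})B(\trans)^\dagger$ and the mutual information between the two outputs is $I(\trans) = H_1(\tr_2[\sigma_\trans]) + H_1(\tr_1[\sigma_\trans]) - H_1(\sigma_\trans)$. I would treat the three terms in turn.

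The first term is immediate: $\tr_2[\sigma_\trans] = \mathcal{E}_\trans[\rho_1] = \rho_\trans$ by the very definition of the loss channel, so $H_1(\tr_2[\sigma_\trans]) = H_1(\rho_\trans)$ is concave in $\trans$ on $[0,1]$ by Theorem~\ref{thm:entr}. For the second term, I would identify $\tr_1[\sigma_\trans]$ — the complementary output — using the beam-splitter action $a_2 \to -\sqrt{1-\trans}\,a_1 + \sqrt{\trans}\,a_2$: tracing out the first mode delivers exactly the input $\rho_1$ subject to loss of transmission $1-\trans$, up to an irrelevant passive unitary (the overall sign), i.e.\ $\tr_1[\sigma_\trans] = U\,\mathcal{E}_{1-\trans}[\rho_1]\,U^\dagger$ with $U$ independent of $\trans$. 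Hence $H_1(\tr_1[\sigma_\trans]) = H_1(\mathcal{E}_{1-\trans}[\rho_1])$, which by Theorem~\ref{thm:entr} applied with transmission parameter $1-\trans\in[0,1]$ is concave in $1-\trans$, and therefore — since $\trans\mapsto 1-\trans$ is affine — concave in $\trans$ on $[0,1]$.

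For the third term, note that $\sigma_\trans$ is a $\trans$-independent unitary image of $\rho_1\otimes\ketbra{0}{0}$, whose spectrum coincides with that of $\rho_1$; thus $H_1(\sigma_\trans) = H_1(\rho_1)$ is constant in $\trans$. Adding up, $I(\trans) = H_1(\rho_\trans) + H_1(\mathcal{E}_{1-\trans}[\rho_1]) - H_1(\rho_1)$ is a sum of two concave functions of $\trans$ and a $\trans$-independent constant, hence concave on $[0,1]$. (When $\rho_1$ is pure this collapses to $I(\trans) = 2H_1(\rho_\trans)$ via the Schmidt symmetry $H_1(\tr_1[\sigma_\trans])=H_1(\tr_2[\sigma_\trans])$ together with $H_1(\sigma_\trans)=0$, recovering the same conclusion.)

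The main obstacle I anticipate is purely in the bookkeeping of the second term: establishing cleanly that the map $\rho_1\mapsto\tr_1[\sigma_\trans]$ is the pure-loss channel with complementary transmission $1-\trans$ up to a fixed unitary — equivalently, that the complementary channel of pure loss is again pure loss — and double-checking that Theorem~\ref{thm:entr} as quoted genuinely covers general mixed inputs and both parameter values $\trans$ and $1-\trans$ in $[0,1]$. Everything else follows from the elementary facts that concavity is preserved under affine reparametrization and under adding a constant. Should one wish to sidestep the complementary-channel identification, an alternative is to purify $\rho_1$ with an ancilla, work with the resulting multimode pure output, and combine the entropies of complementary subsystems with subadditivity; but I would favour the direct decomposition above.
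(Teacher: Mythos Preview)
Your proposal is correct and follows essentially the same approach as the paper: decompose $I(\trans)$ into the two marginal entropies minus the joint entropy, observe the joint entropy is $\trans$-independent by unitary invariance, identify the second marginal as $\mathcal{E}_{1-\trans}[\rho_1]$ (the paper phrases this as the beam-splitter symmetry $\tr_b[\omega_\trans]=\tr_a[\omega_{1-\trans}]$), and apply Theorem~\ref{thm:entr} to each marginal. Your explicit mention of the affine reparametrization and the fixed unitary is slightly more careful than the paper's terse statement, but the argument is the same.
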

\begin{proof}
    The quantum mutual information after the beam splitter is given by $I=H_1(\tr_a[\omega_\trans])+H_1(\tr_b[\omega_\trans])-H_1(\omega_\trans)$ for $\omega_\trans=B(\trans)\rho_1\otimes|0\rangle\langle 0|B(\trans)^\dagger$. Entropies are invariant under unitaries, implying $H_1(\omega_\trans)=H_1(\omega_1)$ is independent of $\trans$. Beam splitters have the symmetry $\tr_b[\omega_\trans]=\tr_a[\omega_{1-\trans}]$. Both $H_1(\tr_a[\omega_\trans])$ and $H_1(\tr_a[\omega_{1-\trans}])$ are concave in $\trans$ by Theorem~\ref{thm:entr}.
\end{proof}

When an initial pure state is subject to loss, Lemma~\ref{lem:sym} shows that the derivative of the purity will always be 0 at exactly $50\%$ loss. We show in the next corollary that if the initial state is mixed, the first derivative will always be nonpositive and the second derivative always nonnegative as long as $\trans\leq1/2$. 

\begin{corollary}
    [Hilbert-Schmidt inner-product and loss]\label{thm:overlap}
    When loss is greater than 50\%, the similarity between two states as measured by the Hilbert-Schmidt inner product increases with increasing loss $1-\trans$ and is convex. 
\end{corollary}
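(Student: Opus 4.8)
The plan is to read everything off Lemma~\ref{lem:HS positive} and then apply one line of elementary calculus. Write the Hilbert--Schmidt inner product of two states---or, more generally, of two positive operators $\rho_1,\sigma_1$---subjected to loss as $O_\trans=\tr[\sigma_\trans^\dagger\rho_\trans]$. By Lemma~\ref{lem:HS positive} this equals $\sum_{m\geq 0}c_m\,\cent^m$ with $\cent=1-2\trans$ and every $c_m\geq 0$; for finite-dimensional states it is a genuine polynomial, and in general a power series that converges at $\cent=1$ to the finite number $\tr[\sigma_0^\dagger\rho_0]$ and hence has radius of convergence at least one. What must be shown is that $\partial_\trans O_\trans\leq 0$ and $\partial_\trans^2 O_\trans\geq 0$ for $\trans\leq 1/2$: the first says $O_\trans$ increases as the loss $1-\trans$ grows, and the second is the convexity statement (convexity in $\trans$ and in $1-\trans$ coincide, since the two differ by an affine map of unit slope).

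First I would change variables via $\partial_\trans=-2\,\partial_\cent$. Term-by-term differentiation is legitimate on $\cent\in[0,1)$, i.e.\ $\trans\in(0,1/2]$, since that interval lies strictly inside the radius of convergence, yielding
\begin{equation}
\frac{\partial O_\trans}{\partial\trans}=-2\sum_{m\geq 1}m\,c_m\,\cent^{m-1},\qquad
\frac{\partial^2 O_\trans}{\partial\trans^2}=4\sum_{m\geq 2}m(m-1)\,c_m\,\cent^{m-2}.
\end{equation}
For $\trans\leq 1/2$ one has $\cent\geq 0$, so every summand on the right-hand sides is nonnegative; therefore the first derivative is nonpositive and the second nonnegative, which is the claim. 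The endpoint $\trans=0$ then follows by continuity of $O_\trans$ (and of its one-sided derivative via Abel's theorem, should one want the derivative statement there as well). That is the entire argument.

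The main point is that there is essentially no obstacle left once Lemma~\ref{lem:HS positive} is in hand: the dark-port/double-copy representation underlying that lemma is where the real work sits, and the corollary reduces to the fact that a power series with nonnegative coefficients is nondecreasing and convex in its argument on the nonnegative axis. The only things that need care are bookkeeping ones: the loss is $1-\trans$, not $\trans$; and $\cent=1-2\trans$ changes sign at $\trans=1/2$, so nonnegativity of the summands---and hence the conclusion---fails once $\trans>1/2$, which is precisely why the statement is confined to losses exceeding $50\%$. It is instructive to contrast the pure-state purity ($\sigma_1=\rho_1$ pure): there the extra symmetry $\pur_\psi(\trans)=\pur_\psi(1-\trans)$ of Lemma~\ref{lem:sym} forces the expansion to be even in $\cent$, so $c_1=0$ and $\partial_\trans\pur_\psi$ vanishes identically at $\trans=1/2$; for a generic pair of distinct mixed states there is no such symmetry, and $\partial_\trans O_\trans\big|_{\trans=1/2}=-2c_1$ is merely nonpositive.
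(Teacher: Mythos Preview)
Your proof is correct and follows essentially the same approach as the paper's: invoke Lemma~\ref{lem:HS positive} to write $O_\trans=\sum_{m\geq 0}c_m\cent^m$ with $c_m\geq 0$ and $\cent=1-2\trans$, differentiate termwise, and observe that for $\trans\leq 1/2$ every summand in the resulting series for $\partial_\trans O_\trans$ and $\partial_\trans^2 O_\trans$ has a definite sign. The additional remarks on convergence, endpoint behaviour, and the contrast with the pure-state symmetry are welcome elaborations but not part of the paper's terse version.
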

\begin{proof}
 By Lemma~\ref{lem:HS positive}, $O_\trans=\sum_{m\geq 0}\lambda^m p_m$ for coefficients $p_m\geq 0$  and $\lambda=1-2\trans$. Thus $\partial O_\trans/\partial \trans=-2\sum_{{m \geq 1}}m\lambda^{m-1} p_m$ and $\partial^2 O_\trans/\partial \trans^2=4\sum_{{m \geq 2}}m(m-1)\lambda^{m-2} p_m$. Since $\lambda\geq 0$ when $\trans\leq 1/2$, $\partial O_\trans/\partial \trans\leq 0$ and $\partial^2 O_\trans/\partial \trans^2\geq 0$ for $\trans\leq 1/2$.
\end{proof}

The previous result together with Theorem~\ref{thm:pur} proves the convexity of the purity of a state under loss with respect to the transmission probability $\trans$, but a bounty of evidence suggests that a much stronger property holds: the logarithm of the purity appears also to be convex in $\trans \in [0, 1]$. Equivalently, the order-$2$ Renyi entropy appears to be concave, which leads us to the following conjecture:
\begin{conjecture}\label{log-conjecture}
    The purity of a  state $\rho_1$ subject to loss is log-convex in $T \in [0, 1]$; $\partial^2 \log \tr[\rho_\trans^2] / \partial \trans^2 \geq 0$.
\end{conjecture}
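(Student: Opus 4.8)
The plan is to reduce the conjecture to a single variance inequality and then to establish that inequality using the positivity of the low-order quasiprobability distributions of $\rho_1$. First I would restrict to $\trans\in[0,\tfrac12]$: since the purity is an even polynomial in $\lambda=1-2\trans$ (Lemma~\ref{lem:sym}, or Eq.~\eqref{purityDarkPort} directly), the function $g(\trans)\equiv\log\tr[\rho_\trans^2]$ is smooth and symmetric about $\trans=\tfrac12$, so $\partial_\trans^2 g$ is symmetric about $\trans=\tfrac12$ and it suffices to prove $\partial_\trans^2 g\ge 0$ on $[0,\tfrac12]$. Then I would pass to characteristic functions: Eq.~\eqref{eq:puritywithChi} at $s=0$, combined with Eq.~\eqref{eq:evolutionofchi} and the rescaling $\gamma=\sqrt{\trans}\,\alpha$, gives
\begin{equation}
\tr[\rho_\trans^2]=\frac1\trans\int\frac{\dd^2\gamma}{\pi}\,|\chi_{\rho_1}(\gamma,0)|^2\,e^{-u|\gamma|^2},\qquad u\equiv\tfrac1\trans-1 .
\end{equation}
Collapsing onto the radial profile $\tilde w(r)\ge 0$ of $|\chi_{\rho_1}(\cdot,0)|^2$ (with $\tilde w(0)=1$ and $\int_0^\infty\tilde w(r)\,\dd r=\tr[\rho_1^2]$), this is $\tr[\rho_\trans^2]=\tfrac1\trans F(u)$ with $F(u)=\int_0^\infty\tilde w(r)e^{-ur}\,\dd r$; $\log F$ is convex in $u$ (the Laplace transform of a positive measure is log-convex), but the prefactor $1/\trans$ and the nonlinear change $u=1/\trans-1$ prevent this from settling the claim directly.

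Carrying out the change of variables, one gets $\trans^2\,\partial_\trans^2 g=1-2(1+u)\bar r(u)+(1+u)^2 V(u)$, where $\bar r$ and $V$ are the mean and variance of $r$ under the tilted density $q_u(r)\propto\tilde w(r)e^{-ur}$. Equivalently, with $Y\equiv(1+u)\,r$ under $q_u$, log-convexity on $[0,\tfrac12]$ is exactly
\begin{equation}
\mathrm{Var}[Y]\ \ge\ 2\,\mathbb{E}[Y]-1\qquad\text{for all }u\ge 1 .\label{eq:propkey}
\end{equation}
A short computation identifies $\mathbb{E}[Y]=(1+u)\bar r(u)=\Ccal^2(\rho_\trans)$; and the results of Section~\ref{sec:QCS} --- which follow from $\partial_\trans g=(\Ccal^2(\rho_\trans)-1)/\trans$, a consequence of Eq.~\eqref{eq:derivFromMasterEq} and Eq.~\eqref{eq:QCS}, together with Lemma~\ref{lem:sym} and Theorem~\ref{thm:pur} --- give $\Ccal^2(\rho_\trans)\le 1$ for $\trans\le\tfrac12$. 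Hence \eqref{eq:propkey} holds automatically whenever $\Ccal^2(\rho_\trans)\le\tfrac12$, and all the difficulty is confined to the window $\Ccal^2(\rho_\trans)\in(\tfrac12,1]$.

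The hard part is \eqref{eq:propkey}, which genuinely needs that $\tilde w$ is the radial profile of $|\chi_{\rho_1}(\cdot,0)|^2$ rather than an arbitrary density --- for a sufficiently peaked $\tilde w$ it is false. The route I would pursue is to exploit extra positivity valid for $u\ge 1$: there $P_{\rho_1}(\cdot,-u)$ is everywhere nonnegative (all $s\le-1$ quasiprobabilities are), and the integrand above is $|\chi_{\rho_1}(\gamma,0)|^2 e^{-u|\gamma|^2}=|\chi_{\rho_1}(\gamma,-u)|^2$, which equals the Fourier transform of the convolution $P_{\rho_1}(\,\cdot\,,-u)\ast P_{\rho_1}(-\,\cdot\,,-u)\ge 0$; hence, normalized by its value $1$ at the origin, it is itself a bona fide probability characteristic function. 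Thus $\mu_u$ is the law of $|\gamma|^2$ for a planar random vector whose characteristic function is a nonnegative function, which constrains its second and fourth radial moments, and I expect \eqref{eq:propkey} to follow by combining those constraints with the known bound $\mathbb{E}[Y]\le 1$; making that step quantitative on $(\tfrac12,1]$ is where the real work lies. As a cross-check, the same reduction can be run in the dark-port picture $\tr[\rho_\trans^2]=\mathbb{E}[\lambda^{N_-}]$ of Eq.~\eqref{purityDarkPort}: there \eqref{eq:propkey} says that the difference-mode photon number of $B(\tfrac12)(\rho_1\otimes\rho_1)B(\tfrac12)^\dagger$, conditioned on any additional no-loss event, is super-Poissonian, and the crux is identical.
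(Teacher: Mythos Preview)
First, note that this statement is a \emph{conjecture} in the paper, not a theorem: the authors explicitly write that ``despite a multitude of evidence, we do not yet have a proof'' and relegate partial progress and reformulations to Appendix~\ref{Appendix:logconvexity}. So there is no benchmark proof to compare against; the relevant comparison is between your outline and the paper's partial results.

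Your reduction to the single inequality $\mathrm{Var}[Y]\ge 2\,\mathbb{E}[Y]-1$ with $\mathbb{E}[Y]=\Ccal^2(\rho_\trans)$ is correct and illuminating, and your dark-port cross-check lands exactly on the paper's own reformulation: Appendix~\ref{Appendix:logconvexity} shows that, for $\trans\le\tfrac12$, log-convexity is equivalent to $g^{(2)}(\rho_-)\ge 1$, i.e.\ super-Poissonian statistics in the difference mode of $B(\tfrac12)(\rho_1\otimes\rho_1)B(\tfrac12)^\dagger$ after a further no-loss conditioning. So your characteristic-function route and the paper's number-operator route converge to the same open core, which you correctly identify as the window $\Ccal^2(\rho_\trans)\in(\tfrac12,1]$ and explicitly leave unfinished.

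There is, however, a genuine gap in your very first step. Lemma~\ref{lem:sym} is stated only for \emph{pure} initial states, and Eq.~\eqref{purityDarkPort} does \emph{not} make $\pur(\rho_\trans)$ an even polynomial in $\lambda=1-2\trans$ for mixed $\rho_1$: the coefficients $p_m$ are nonnegative but need not vanish for odd $m$. Concretely, any mixed $\rho_1$ has $\pur(\rho_0)=1$ (the output is vacuum) while $\pur(\rho_1)<1$, so the symmetry $\trans\leftrightarrow 1-\trans$ fails outright. This breaks your restriction to $[0,\tfrac12]$: for mixed states the range $\trans\in(\tfrac12,1]$, equivalently $u\in[0,1)$, must be treated separately, and there the positivity of $P_{\rho_1}(\cdot,-u)$ that you plan to exploit is no longer available. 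The paper avoids this trap by a different reduction---since loss composes multiplicatively, it suffices to prove $\partial_\trans^2\log\tr[\rho_\trans^2]\big|_{\trans=1}\ge 0$ for every physical state $\rho_1$---which you may find a more robust starting point than the pure-state-only symmetry argument.
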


Conjecture~\ref{log-conjecture} can equivalently be written as
\begin{equation}
    \left (\frac{\partial}{\partial \trans}\tr[\rho_\trans^2] \right )^2 \leq \tr[\rho_\trans^2] \frac{\partial^2}{\partial\trans^2} \tr[\rho_\trans^2].
\end{equation}
This form of the inequality is reminiscent of the Cauchy-Schwarz inequality. In fact, we can use the Cauchy-Schwarz inequality to prove a weaker result: log-convexity  of the purity with respect to the modified parameter $\ell = \log (1 - 2 \trans)$ for $\trans \in (-\infty, 1/2]$. Using Eq.~\eqref{purityDarkPort},
log-convexity in $\ell$ amounts to the inequality
\begin{equation}
    \tr[\rho_1 \otimes \rho_1\, \e^{\ell N_-} N_-]^2 \leq \tr[\rho_1 \otimes \rho_1\, \e^{\ell N_-}] \tr[\rho_1 \otimes \rho_1\, \e^{\ell N_-} N_-^2].
\end{equation}
For all $\trans \leq 1/2$, $\ell$ is real and the inequality above holds for all mixed states $\rho$ via the Cauchy-Schwarz inequality
\begin{equation}
    |\langle A, B \rangle|^2 \leq \langle A, A \rangle \langle B, B \rangle
\end{equation}
in which the inner-product is chosen to be \eq{\langle A, B \rangle = \tr[\rho_1 \otimes \rho_1 \, \e^{\ell N_-} A^\dagger B]}
and we have set $A = 1$ and $B = N_-$.

The problem of log-convexity with respect to the original loss parameter $\trans$ is harder and despite a multitude of evidence, we do not yet have a proof. In Appendix~\ref{Appendix:logconvexity}, we include more details of possible paths toward a conclusive proof.

\section{Corollaries for ladder operators beyond the Cauchy-Schwarz inequalities}
\label{sec:BeyondCS}

It is known that applying the Cauchy-Schwarz inequality to creation and annihilation operators gives the following relation
\eq{|\tr(\rho \, \hat{a})|^2 \leq \tr(\rho \, \hat{a}^\dagger \hat{a}). \label{eq:CS-a}
} This can also been derived from properties of lossy states, as will be seen below.
However, with the results of the previous section, we can also derive some new inequalities. From Eq.~\eqref{eq:derivFromMasterEq}, we can 
write the derivative of the purity as the expectation value of the Lindbladian for a loss channel:
\begin{equation}
    \begin{aligned}
        \frac{\partial \mathcal{P}(\rho_\trans)}{\partial \trans}&=   \frac{\partial \tr[\rho_\trans^2]}{\partial \trans}=\frac{2}{\trans}\tr[a^\dagger a\rho_\trans\rho_\trans-a\rho_\trans a^\dagger \rho_\trans].\\
    \end{aligned}\label{eq:derPwithaadag}
\end{equation}
Since  Corollary~\ref{thm:overlap} shows that $\frac{\partial \mathcal{P}(\rho_\trans)}{\partial \trans}\leq0$ for all states with $\trans\leq1/2$, we have a first inequality:
\begin{corollary}
For all states $\rho_1$ undergoing loss $\trans\leq1/2$ 
   $$ \tr[a^\dagger a\rho_\trans\rho_\trans]\leq\tr[a\rho_\trans a^\dagger \rho_\trans].$$
\end{corollary}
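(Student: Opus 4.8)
The plan is to read the claimed inequality straight off the sign of the purity's derivative. First I would recall Eq.~\eqref{eq:derPwithaadag}, which rewrites $\partial\mathcal{P}(\rho_\trans)/\partial\trans=\partial\tr[\rho_\trans^2]/\partial\trans$ as $(2/\trans)\bigl(\tr[a^\dagger a\,\rho_\trans\rho_\trans]-\tr[a\,\rho_\trans a^\dagger\,\rho_\trans]\bigr)$; this identity follows from the master-equation form of the derivative, Eq.~\eqref{eq:derivFromMasterEq}, together with cyclicity of the trace. The purity is the Hilbert--Schmidt inner product $O_\trans=\tr[\rho_\trans^\dagger\rho_\trans]$ of the positive operator $\rho_1$ with itself, so Corollary~\ref{thm:overlap} applies and tells us that $\partial\mathcal{P}(\rho_\trans)/\partial\trans\le 0$ for all $\trans\in(0,1/2]$.

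Combining the two statements: on $\trans\in(0,1/2]$ the prefactor $2/\trans$ is strictly positive, so the bracket must be nonpositive, i.e.\ $\tr[a^\dagger a\,\rho_\trans\rho_\trans]\le\tr[a\,\rho_\trans a^\dagger\,\rho_\trans]$, which is the claim; the endpoint $\trans=0$ then follows by continuity, both sides tending to $0$ as $\rho_\trans\to\proj{0}$. The only point worth checking so that the inequality is well posed is that both sides are real: the left side equals $\tr[(a\rho_\trans)^\dagger(a\rho_\trans)]$, which is in fact nonnegative, and the right side equals its own complex conjugate by cyclicity together with $\rho_\trans^\dagger=\rho_\trans$.

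There is essentially no obstacle: all of the work is done by Corollary~\ref{thm:overlap}, and the present statement is merely its translation into ladder-operator language through Eq.~\eqref{eq:derPwithaadag}. The one pitfall to avoid is invoking the stronger pure-state result Theorem~\ref{thm:pur} in place of Corollary~\ref{thm:overlap}, since here $\rho_1$ is an arbitrary, generally mixed, state; one needs precisely the mixed-state monotonicity that Corollary~\ref{thm:overlap} supplies.
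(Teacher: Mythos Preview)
Your proposal is correct and follows essentially the same route as the paper: read off the sign of $\partial\mathcal{P}(\rho_\trans)/\partial\trans$ from Corollary~\ref{thm:overlap} and translate via Eq.~\eqref{eq:derPwithaadag}. Your added remarks on the $\trans=0$ endpoint and on reality of both sides are sound bookkeeping but not additional ideas beyond what the paper uses.
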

If Conjecture~\ref{log-conjecture} is true then this Corollary can be extended to all $\trans$ and the inequality can be made tighter in a state-dependent way.

When $\trans=1$ and the initial state is a pure state $\ket{\psi}$, Eq.~\eqref{eq:derPwithaadag} combined with Lemma~\ref{lem:sym} and Theorem~\ref{thm:pur} reduces to Eq.~\eqref{eq:CS-a}. That pure states become more mixed after an infinitesimal amount of loss is thus equivalent to a Cauchy-Schwarz inequality.

Let us now focus on initial pure states $\ket\psi$ such that $\rho_T=X(\trans)X(\trans)^\dag$ with
 $X(\trans)=\sum_n K_n(\trans)\ket{\psi}\bra{n}$ and where $K_n$ is the Kraus operator defined in Eq.~\eqref{kraus}; these are used routinly in the proofs of Ref.~\cite{CompanionShortarXiv}. Then
\begin{equation}
    \begin{aligned}
        a X(\trans)&=\!\sum_n\frac{a^{n+1}}{\sqrt{(n+1)!}}\sqrt{1-\trans}^n\sqrt{\trans}^{a^\dag a-n}\ketbra{\psi}{n}\sqrt{n\!+\!1}\nonumber\\
        &=\!\sum_n\!\frac{a^{n+1}}{\sqrt{(n+1)!}}\frac{\sqrt{1\!-\!\trans}^{n+1}\sqrt{\trans}^{a^\dag a-n-1}}{\sqrt{1\!-\!\trans}/\sqrt{\trans}}\ketbra{\psi}{n\!+\!1}a^\dag\nonumber\\
        &=\!X(T)a^\dag\frac{\sqrt{\trans}}{\sqrt{1-\trans}}.
    \end{aligned}
\end{equation}
Next, remark that
\begin{equation}
    \begin{aligned}
    \tr[a\rho_\trans a^\dag\rho_\trans]&=\tr[a X(\trans)X(\trans)^\dag a^\dag X(\trans)X(\trans)^\dag]\\
    &=\tr[X(\trans)a^\dag a X(\trans)^\dag X(\trans)X(\trans)^\dag]\frac{\trans}{1-\trans}\\
    &=\tr[a^\dag a\rho_{1-\trans}^2]\frac{\trans}{1-\trans},
    \label{eq:transpose trick}
  \end{aligned}
\end{equation} 
where in the last line we used $X(\trans)=X(1-\trans)^T$ proven in Ref.~\cite{CompanionShortarXiv}. This ``transpose trick'' allows us to convert expressions with $\rho_\trans$ to ones with $\rho_{1-\trans}$. From Eq. \eqref{eq:derPwithaadag}, we can now write
\begin{equation}
    \begin{aligned}
        \frac{\partial \mathcal{P}(\rho_\trans)}{\partial \trans}
        &=2\left(\frac{\tr[a^\dagger a \rho_\trans^2]}{T}-\frac{\tr[a^\dagger a \rho_{1-\trans}^2]}{1-T}\right).
    \end{aligned}
\end{equation}
This equation combined with Lemma~\ref{lem:sym} leads to the following corollary:
\begin{corollary}
    For initial pure states and $\trans\leq1/2$,
    $$\frac{\tr[a^\dagger a \rho_\trans^2]}{T}\leq\frac{\tr[a^\dagger a \rho_{1-\trans}^2]}{1-T}.$$
\end{corollary}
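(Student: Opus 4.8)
The plan is to read the inequality straight off the identity for $\partial \pur(\rho_\trans)/\partial \trans$ established just above the statement, combined with the known sign of that derivative for $\trans \le 1/2$. First I would recall that for an initial pure state the purity $\pur_\psi(\trans)$ is symmetric about $\trans = 1/2$ (Lemma~\ref{lem:sym}) and convex in $\trans$ (Theorem~\ref{thm:pur}); together these force $\partial \pur_\psi(\trans)/\partial \trans$ to be nondecreasing in $\trans$ and to vanish at $\trans = 1/2$, hence $\partial \pur_\psi(\trans)/\partial \trans \le 0$ for all $\trans \le 1/2$. Then I would insert the displayed expression $\partial \pur(\rho_\trans)/\partial \trans = 2\bigl(\tr[a^\dagger a \rho_\trans^2]/T - \tr[a^\dagger a \rho_{1-\trans}^2]/(1-T)\bigr)$ and rearrange: nonpositivity of the left-hand side is exactly the claimed bound, and at $\trans = 1/2$ it degenerates to the equality consistent with the extremum there.

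The only things that need checking are bookkeeping rather than genuine difficulty. One should verify that the identity itself is valid on the whole range $0 < \trans \le 1/2$: it rests on the factorization $\rho_\trans = X(\trans)X(\trans)^\dagger$ with $X(\trans) = \sum_n K_n(\trans)\ketbra{\psi}{n}$, on the intertwining relation $a X(\trans) = X(\trans)\, a^\dagger \sqrt{\trans}/\sqrt{1-\trans}$, and on the transpose identity $X(\trans) = X(1-\trans)^T$ imported from Ref.~\cite{CompanionShortarXiv}, all of which hold for $\trans \in (0,1)$, so the factor $1/(1-T)$ is harmless in this regime. I would also remark, paralleling the comment after the preceding corollary, that if Conjecture~\ref{log-conjecture} were established the bound could presumably be tightened in a state-dependent way.

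The hard part, such as it is, is not in this corollary at all: it is entirely contained in Theorem~\ref{thm:pur} (convexity of the purity of a lossy pure state) together with Lemma~\ref{lem:sym}, both carried over from the companion paper. Once those are granted, the corollary is a one-line consequence of the algebraic identity for the derivative of the purity in terms of ladder operators.
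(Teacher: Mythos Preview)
Your proposal is correct and follows exactly the paper's approach: the paper's proof is the single sentence ``This equation combined with Lemma~\ref{lem:sym} leads to the following corollary,'' where ``this equation'' is the identity $\partial \pur(\rho_\trans)/\partial \trans = 2\bigl(\tr[a^\dagger a \rho_\trans^2]/T - \tr[a^\dagger a \rho_{1-\trans}^2]/(1-T)\bigr)$ and the sign of the derivative for $\trans \le 1/2$ has already been established from Lemma~\ref{lem:sym} and Theorem~\ref{thm:pur} together. If anything, you are slightly more explicit than the paper in invoking convexity alongside symmetry to pin down the sign.
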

Again, if Conjecture~\ref{log-conjecture} is true then this Corollary can be extended to all $\trans$ and the inequality can be made tighter in a state-dependent way.

In Appendix~\ref{app: second deriv}, we also compute the second derivative of a state subject to loss. All of the expressions there will thus be positive for all pure states and for all states with $\trans\leq 1/2$ (as usual, if Conjecture~\ref{log-conjecture} is true then this can be extended to all $\trans$). For example, considering the first expression and any pure state, we find
\begin{equation}
    \begin{aligned}
        4\Re\langle a^\dagger a^2\rangle\langle a^\dagger \rangle-|\langle a^2\rangle|^2\leq  2\langle a^\dagger a\rangle^2-\langle a^\dagger a\rangle+\langle (a^\dagger a)^2\rangle,
    \end{aligned}
\end{equation} which cannot be found from any Cauchy-Schwarz inequality of which we are aware.

Similar inequalities follow from all higher-order derivatives. Because $(-1)^m\partial^m \pur/\partial T^m\geq 0$ for any state with $T\leq 1/2$ and for all pure states with $T \in (1/2,\infty)$ the expression is positive (negative) for even (odd) $m$, and because each derivative introduces one more creation and annihilation operator through the master equation Eq.~\eqref{eq:derivFromMasterEq}, the $m$th derivative will yield an inequality for up to $2m$ creation and annihilation operators that is quadratic in the quantum state.

\section{Corollaries for the quadrature coherence scale}
\label{sec:QCS}

In Refs.~\cite{Goldbergetal2023,HGH} we studied the evolution of the QCS for different pure states under loss and remarked that all QCS-nonclassicality was lost at exactly 50\% while mixed states undergoing at least 50\% loss always had a QCS below one. This led us to conjecture in Ref.~\cite{HGH} that all states lose their nonclassicality, as measured by the QCS, at the latest after 50\% loss. This is now proven in Theorem \ref{thm:QCSunderlossPure} and \ref{thm:QCSunderlossMixed} using the result in our companion paper~\cite{CompanionShortarXiv} showing that maximal entanglement is generated by
beam splitters with equal reflection and transmission probabilities.

We start by introducing the following Lemma expressing the QCS in terms of the rate of purity change. 

\begin{lem}[QCS as rate of purity change]\label{thm:QCSintermsofpurity}
    The QCS is equal to 
    \eq{\Ccal^2(\rho_\trans)=\frac{\trans}{\mathcal{P}(\rho_\trans)}\frac{\partial \mathcal{P}(\rho_\trans)}{\partial\trans}+1.\label{eq:QCSlosschannel}}
\end{lem}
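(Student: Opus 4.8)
The plan is to prove the identity by a direct computation: expand the commutator form of the QCS into ladder operators, and recognize the resulting combination as the Lindbladian expectation value that already governs $\partial_\trans\pur(\rho_\trans)$ in Eq.~\eqref{eq:derPwithaadag}. No conceptual input beyond the master equation~\eqref{eq:derivFromMasterEq} is needed; the content is that the Lindbladian term controlling the purity's rate of change is, up to the factor $\trans/\pur$, exactly the numerator of $\Ccal^2$.

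First I would simplify each commutator term in the definition~\eqref{eq:QCS}. Since $[X,\rho_\trans]=-[\rho_\trans,X]$, we have $\tr\big([\rho_\trans,X][X,\rho_\trans]\big)=-\tr\big([\rho_\trans,X]^2\big)$, and expanding the square and repeatedly using cyclicity of the trace gives $\tr\big([\rho_\trans,X][X,\rho_\trans]\big)=2\tr[\rho_\trans^2X^2]-2\tr[X\rho_\trans X\rho_\trans]$, with the analogous identity for $P$. Adding the two, I would then invoke the elementary operator identities $X^2+P^2=2a^\dagger a+\openone$ and $X\rho_\trans X+P\rho_\trans P=a^\dagger\rho_\trans a+a\rho_\trans a^\dagger$, both immediate from $X=(a^\dagger+a)/\sqrt2$ and $P=i(a^\dagger-a)/\sqrt2$. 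This turns the bracket in Eq.~\eqref{eq:QCS} into $4\tr[a^\dagger a\rho_\trans^2]+2\pur(\rho_\trans)-2\tr[a^\dagger\rho_\trans a\rho_\trans]-2\tr[a\rho_\trans a^\dagger\rho_\trans]$. Since $\tr[a\rho_\trans a^\dagger\rho_\trans]=\tr[a^\dagger\rho_\trans a\rho_\trans]$ by cyclicity, dividing by $2\pur(\rho_\trans)$ yields $\Ccal^2(\rho_\trans)=1+\tfrac{2}{\pur(\rho_\trans)}\big(\tr[a^\dagger a\rho_\trans^2]-\tr[a^\dagger\rho_\trans a\rho_\trans]\big)$.

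Next I would match this to the purity's derivative. Differentiating $\pur(\rho_\trans)=\tr[\rho_\trans^2]$ and inserting the master equation~\eqref{eq:derivFromMasterEq}—equivalently, reading off Eq.~\eqref{eq:derPwithaadag}—gives $\partial_\trans\pur(\rho_\trans)=\tfrac{2}{\trans}\big(\tr[a^\dagger a\rho_\trans^2]-\tr[a\rho_\trans a^\dagger\rho_\trans]\big)$, and the bracket here is literally the bracket above, once one notes (again by cyclicity) that $\tr[a\rho_\trans a^\dagger\rho_\trans]=\tr[a^\dagger\rho_\trans a\rho_\trans]$. Multiplying by $\trans/\pur(\rho_\trans)$ gives $\tfrac{\trans}{\pur(\rho_\trans)}\partial_\trans\pur(\rho_\trans)=\tfrac{2}{\pur(\rho_\trans)}\big(\tr[a^\dagger a\rho_\trans^2]-\tr[a^\dagger\rho_\trans a\rho_\trans]\big)=\Ccal^2(\rho_\trans)-1$, which is the claimed identity~\eqref{eq:QCSlosschannel}.

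I do not anticipate a genuine obstacle; the proof is bookkeeping. The points that require care are (i) keeping track of signs and cancellations when expanding the commutator squares, (ii) collapsing $X^2+P^2$ and $X\rho_\trans X+P\rho_\trans P$ correctly to their normal-ordered ladder-operator forms, and (iii) observing that $\tr[a\rho_\trans a^\dagger\rho_\trans]$ and $\tr[a^\dagger\rho_\trans a\rho_\trans]$ agree under the trace, so that the single ``cross'' term surviving in the QCS is exactly the one appearing in $\partial_\trans\pur(\rho_\trans)$. (As a side remark, the denominator $\pur(\rho_\trans)$ in Eq.~\eqref{eq:QCS} should be understood as belonging to the same state $\rho_\trans$ whose commutators appear in the numerator, which is what makes the cancellation of the $\pur$ factors exact.)
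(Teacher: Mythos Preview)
Your proof is correct but takes a genuinely different route from the paper's. You work directly from the single-mode commutator definition~\eqref{eq:QCS}, expand in ladder operators via $X^2+P^2=2a^\dagger a+\openone$ and $X\rho X+P\rho P=a^\dagger\rho a+a\rho a^\dagger$, and then match the result to the master-equation expression~\eqref{eq:derPwithaadag} for $\partial_\trans\pur$. The paper instead invokes the two-copy formulation $\Ccal^2(\rho)=\tr[\rho\otimes\rho\,\hat N]/\tr[\rho\otimes\rho\,\hat S]$ from Ref.~\cite{Griffet}, proves the operator identity $(a_1^\dagger-a_2^\dagger)(a_1-a_2)\hat S=\hat N-\hat S$, and then uses the $P$-function representation~\eqref{eq:rhoT} of $\rho_\trans$ to evaluate $\tr[\rho_\trans\otimes\rho_\trans(\hat N-\hat S)]$ as $\trans\,\partial_\trans\pur(\rho_\trans)$. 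Your approach is more elementary and self-contained---it needs neither the two-copy machinery nor the $P$-function, only the Lindbladian already quoted in the preliminaries. The paper's approach, by contrast, ties the lemma into the dark-port/two-copy picture (Fig.~\ref{fig:proofPurity}, Eq.~\eqref{purityDarkPort}) that runs through the rest of the work, so the identity $\hat N-\hat S=(a_1^\dagger-a_2^\dagger)(a_1-a_2)\hat S$ makes the physical origin of the ``$+1$'' and the link to difference-mode photon number visible in a way your algebraic route does not.
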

\begin{proof}
    The initial state $\rho_1$ can be written in terms of the $P$ function as in Eq.~\eqref{eq:class-state}.
In a loss channel of transmittance $\trans$, it is known that a coherent state evolves as $\ket{\alpha }\rightarrow \ket{\sqrt{\trans}\alpha }$.
Therefore, the state $\rho_1$ becomes
\eq{\rho_\trans=\int \dd^2\alpha  \ P(\alpha )\ketbra{\sqrt{\trans}\alpha  }{\sqrt{\trans}\alpha  }.\label{eq:rhoT}}
It was shown in Ref.~\cite{Griffet} that the QCS can be computed as 
\eq{\Ccal^2(\rho)=\frac{\tr[\rho\otimes\rho \hat{N}]}{\tr [\rho\otimes\rho \hat{S}]}\label{QCS_twocopies}.}
$\hat{S}$ is the SWAP operator, acting as $\hat S|\psi\rangle\otimes|\phi\rangle=|\phi\rangle\otimes|\psi\rangle$ and is known to allow a measure of purity using two copies of a state~\cite{Bovinoetal2005,Islametal2015} because it is equivalent to a balanced beam splitter acting on the parity operator.
In the numerator,
$\hat{N}$ is the two-copy observable \eq{\hat{N}=\frac12 \left((X_1-X_2)^2+(P_1-P_2)^2\right)\hat S\label{Nobservable}.}

Let $a_j=(x_j+ip_j)/\sqrt{2}$ for $j+1,2$ be the mode operators with $[a_j,a_j^\dag]=1$. Then, using \eqref{Nobservable}, we can compute
\eqarray{(a_1^\dag\!-\!a_2^\dag)(a_1\!-\!a_2)\hat S
&=&\frac12 \big((X_1\!-\!X_2)^2+(P_1\!-\!P_2)^2-2\big)\hat S\nonumber\\
&=&\hat N-\hat S.\label{eq:NmoinsS}}
Using Eqs.~\eqref{eq:rhoT} and \eqref{eq:NmoinsS}, we see that 
\eq{\tr\left[\rho_\trans\otimes\rho_\trans(\hat N-\hat S)\right]
=\trans\frac{\partial\mathcal{P}(\rho_\trans)}{\partial\trans}}
and also, from Eqs.~\eqref{QCS_twocopies} and $\tr [\rho\otimes\rho \hat{S}]=\tr [\rho^2]$, that
\eqarray{\tr\left[\rho_\trans\otimes\rho_\trans(\hat N-\hat S)\right]&=&
\left(\Ccal^2(\rho_\trans)-1\right)\mathcal{P}(\rho_\trans).}
Equating the two previous equations completes the proof.
\end{proof}
Note that Eq. \eqref{eq:QCSlosschannel} can also be rewritten as 
\eq{\Ccal^2(\rho_\trans)
=-\trans\frac{\partial H_2(\rho_\trans)}{\partial\trans}+1\label{QCSwithlog} ,}
where $H_2(\rho_\trans)$ is the 2-Rényi entropy.

Equation~\eqref{eq:QCSlosschannel} was already derived in a more general context in Ref.~\cite{Hertz} where the focus was on the evolution of the QCS over time\footnote{Note that in this previous work time increases from 0, while in this paper loss decreases from 1.} for a state in the presence of an environment that was not only the vacuum but any thermal state.
It was also proven in Ref.~\cite{Hertz} that the QCS is directly linked to the rate of decoherence of a state; viz., its lossy evolution: the greater the initial nonclassicality, the faster the state loses its nonclassicality. The following theorem now conclusively shows that, for all nonclassical pure states, the QCS will decrease to 1 after exactly 50\% loss.
This proves the conjecture in Ref.~\cite{HGH}.
\begin{thm}[QCS for pure states subject to loss]\label{thm:QCSunderlossPure}
    QCS=1 for all pure states that lose 50\% of their photons. QCS$\leq$1 for states that lose more than 50\% of their photons. QCS $\geq$ 1 for all pure states that lose less than 50\% of their photons.
\end{thm}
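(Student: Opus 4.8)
The plan is to derive all three assertions from Lemma~\ref{thm:QCSintermsofpurity}, which gives $\Ccal^2(\rho_\trans)-1=\frac{\trans}{\pur(\rho_\trans)}\frac{\partial\pur(\rho_\trans)}{\partial\trans}$. Since $\trans\ge 0$ and $\pur(\rho_\trans)>0$, for $\trans>0$ the sign of $\Ccal^2(\rho_\trans)-1$ is exactly the sign of $\partial\pur(\rho_\trans)/\partial\trans$; the endpoint $\trans=0$ (total loss) is handled separately, since there $\rho_0$ is the vacuum, a coherent state, for which $\Ccal=1$ directly. So the theorem reduces to pinning down the sign of the purity's derivative in the three regimes ``loses exactly $50\%$'' ($\trans=\tfrac12$), ``loses less than $50\%$'' ($\trans>\tfrac12$), and ``loses more than $50\%$'' ($\trans<\tfrac12$).

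First, the equality at $\trans=\tfrac12$: by Lemma~\ref{lem:sym} the pure-state purity satisfies $\pur_\psi(\trans)=\pur_\psi(1-\trans)$, so differentiating and evaluating at the symmetry point gives $\left.\partial\pur_\psi/\partial\trans\right|_{\trans=1/2}=0$ and hence $\Ccal^2(\rho_{1/2})=1$. Second, for pure states with $\trans>\tfrac12$: Theorem~\ref{thm:pur} says $\pur_\psi$ is convex on $\R$, so its derivative is nondecreasing; combined with the vanishing of that derivative at $\trans=\tfrac12$ just established, we get $\partial\pur_\psi/\partial\trans\ge 0$ for all $\trans\ge\tfrac12$, and Lemma~\ref{thm:QCSintermsofpurity} yields $\Ccal^2(\rho_\trans)\ge 1$, i.e. $\Ccal\ge 1$. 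Third, for arbitrary (not necessarily pure) states with $\trans\le\tfrac12$: apply Corollary~\ref{thm:overlap} with the second operator taken equal to $\rho_1$, so that the Hilbert--Schmidt overlap $O_\trans$ is precisely $\pur(\rho_\trans)$; the corollary gives $\partial\pur(\rho_\trans)/\partial\trans\le 0$ for $\trans\le\tfrac12$, whence $\Ccal^2(\rho_\trans)\le 1$ for \emph{all} states in that range. This covers the ``more than $50\%$ loss'' claim (including the $\trans=0$ case) and, at $\trans=\tfrac12$, also reproduces $\Ccal\le 1$ for mixed inputs.

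The main obstacle is not in this argument but in the companion-paper ingredients on which it rests: the symmetry and convexity of the lossy purity (Lemma~\ref{lem:sym} and Theorem~\ref{thm:pur}) and the positive-coefficient polynomial structure underlying Corollary~\ref{thm:overlap}. Granting those, the only point requiring genuine care is the limiting behavior as $\trans\to 0^+$: one must check that $\trans\,\partial_\trans\pur(\rho_\trans)/\pur(\rho_\trans)\to 0$, which holds for any input of finite mean photon number because the polynomial expression $\pur(\rho_\trans)=\tr[\rho_1\otimes\rho_1\,\lambda^{N_-}]$ with $\lambda=1-2\trans$ has a bounded derivative at $\lambda=1$. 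One should also keep straight that the $\trans>\tfrac12$ statement genuinely uses the pure-state hypothesis (through symmetry of the purity), whereas the $\trans\le\tfrac12$ statement does not.
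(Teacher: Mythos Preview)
Your proof is correct and takes essentially the same approach as the paper: both reduce the claim to the sign of $\partial\pur_\psi/\partial\trans$ via Lemma~\ref{thm:QCSintermsofpurity} and then determine that sign using the symmetry (Lemma~\ref{lem:sym}) and convexity (Theorem~\ref{thm:pur}) of the lossy purity. Your substitution of Corollary~\ref{thm:overlap} for the $\trans\le\tfrac12$ branch is a harmless variant that already covers mixed inputs (effectively folding in Theorem~\ref{thm:QCSunderlossMixed}), and your endpoint discussion at $\trans\to 0^+$ is extra care the paper simply omits.
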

\begin{proof}
By Lemma~\ref{thm:QCSintermsofpurity}, the QCS can be rewritten as a function of the derivative of the purity. Lemma~\ref{lem:sym} and Theorem~\ref{thm:pur} establish when the QCS is greater, smaller or equal to~1.
\end{proof}

\begin{thm}[QCS for mixed states subject to loss; Conjecture \cite{HGH}]\label{thm:QCSunderlossMixed}
    $QCS \leq 1$ for all states subject to 50\% loss or more; no state subject to 50\% or more loss can be certified as nonclassical by the QCS.
\end{thm}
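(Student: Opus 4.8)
The plan is to reduce the mixed-state claim to two facts already in hand: Lemma~\ref{thm:QCSintermsofpurity}, which rewrites $\Ccal^2(\rho_\trans)$ in terms of the purity derivative, and the sign of that derivative for $\trans\le 1/2$, which is immediate from the positive-polynomial structure of Lemma~\ref{lem:HS positive}. The essential observation is that \emph{both} of these statements hold for arbitrary states, with no purity assumption on the input---this is exactly what upgrades Theorem~\ref{thm:QCSunderlossPure} to the mixed case.

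First I would note that a state which has suffered $50\%$ loss or more is, by definition, $\rho_\trans=\mathcal{E}_\trans[\rho_1]$ for some physical input $\rho_1$ and some $\trans\in(0,1/2]$; embedding it in the loss family $\{\mathcal{E}_{\trans'}[\rho_1]\}_{\trans'\in(0,1]}$ makes the derivative appearing in Lemma~\ref{thm:QCSintermsofpurity} meaningful. Applying that lemma at this value of $\trans$ gives
\[
\Ccal^2(\rho_\trans)-1=\frac{\trans}{\mathcal{P}(\rho_\trans)}\,\frac{\partial\mathcal{P}(\rho_\trans)}{\partial\trans}.
\]
Next, Lemma~\ref{lem:HS positive} applied with $\sigma_1=\rho_1$ gives $\mathcal{P}(\rho_\trans)=\sum_{m\ge0}\lambda^m p_m$ with $\lambda=1-2\trans$ and all $p_m\ge0$, so $\partial\mathcal{P}(\rho_\trans)/\partial\trans=-2\sum_{m\ge1}m\lambda^{m-1}p_m$, which is $\le0$ exactly when $\lambda\ge0$, i.e.\ when $\trans\le1/2$ (the same computation as in Corollary~\ref{thm:overlap}). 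Since $\trans>0$ and $\mathcal{P}(\rho_\trans)>0$, the right-hand side above is nonpositive, hence $\Ccal^2(\rho_\trans)\le1$ and $\rho_\trans$ cannot be QCS-certified as nonclassical. Substituting $2\trans=1-\lambda$ even records the explicit margin,
\[
\Ccal^2(\rho_\trans)=1-\frac{(1-\lambda)\sum_{m\ge1}m\lambda^{m-1}p_m}{\sum_{m\ge0}\lambda^m p_m},
\]
which is manifestly at most $1$ for $\lambda\in[0,1)$.

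I do not anticipate a genuine obstacle: the whole content is recognizing that Lemmas~\ref{thm:QCSintermsofpurity} and~\ref{lem:HS positive} are mixed-state statements and that their $\trans=1/2$ threshold coincides with $50\%$ loss. The only delicate point worth checking is that the two-copy identities behind Lemma~\ref{thm:QCSintermsofpurity}---in particular the coherent-state manipulation $\ket\alpha\to\ket{\sqrt{\trans}\alpha}$---remain valid when the Glauber--Sudarshan distribution of $\rho_1$ is only a tempered distribution; this can be avoided entirely by deriving $\tr[\rho_\trans\otimes\rho_\trans(\hat N-\hat S)]=\trans\,\partial_\trans\mathcal{P}(\rho_\trans)$ directly from the master equation~\eqref{eq:derivFromMasterEq} together with the operator identity~\eqref{eq:NmoinsS}, both of which hold for all states.
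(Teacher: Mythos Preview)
Your proposal is correct and follows the same route as the paper: the paper's proof is a one-line appeal to Lemma~\ref{thm:QCSintermsofpurity} together with Corollary~\ref{thm:overlap}, and you invoke exactly these two ingredients (unpacking Corollary~\ref{thm:overlap} back to its source Lemma~\ref{lem:HS positive}). The explicit margin formula and the remark about bypassing the $P$-function derivation via the master equation are reasonable elaborations but not needed for the argument.
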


\begin{proof}
    This a direct consequence of Lemmas~\ref{thm:QCSintermsofpurity} and Corollary~\ref{thm:overlap}.
\end{proof}
    Theorem \ref{thm:QCSunderlossMixed} implies that the value of the QCS will never be above 1 when a state suffers more than 50\% loss. At this turning point, no classicality can be detected with the QCS witness and none will be detected for any further loss. Remark that the $P$-function of a lossy state is a rescaling of the initial $P$-function (see Eq.~\eqref{s_quasi_distribution_evolution}); if the latter is nonpositive, so will be that of the lossy state. Nevertheless, as the $P$-function is not unique, it is not enough to conclude that the state is still nonclassical. We will call those states weakly nonclassical because the QCS, which is also an estimate of the distance to the set of classical states \cite{Debievre}, is small.

Finally, using the results of the previous section (see Eq.~\eqref{eq:derPwithaadag}) and the formulation of the QCS as in Lemma~\ref{thm:QCSintermsofpurity}, we can write the QCS as the expectation value of the loss channel Lindbladian:
\eq{       \begin{aligned}
   \Ccal^2(\rho_\trans)&=   
   \frac{2 }{\tr[\rho_\trans^2]}\tr[a^\dagger a\rho_\trans\rho_\trans-a\rho_\trans a^\dagger \rho_\trans]+1\\
   &= \frac{2 \trans}{\tr[\rho_\trans^2]}\left(\frac{\tr[a^\dagger a \rho_\trans^2]}{T}-\frac{\tr[a^\dagger a \rho_{1-\trans}^2]}{1-T}\right)+1.
       \end{aligned}}

\section{Corollaries for quasiprobability distributions}
\label{sec:InequalityIntegrals}
Expressing the previous results in terms of quasiprobability distributions allows us to define some equalities and inequalities for integrals of those distributions.

We start by expressing the purity of a lossy state in terms of its $P$-function using the definition of $\rho_\trans$ in Eq.~\eqref{eq:rhoT}:
\begin{align}\label{thm:PurWithP}
&\mathcal{P}(\rho_\trans)=\tr[\rho_\trans^2]\nonumber\\
&=\!\tr\left[\!\int \!\dd^2\alpha \,\dd^2\beta  P_{\rho_1}(\alpha )P_{\rho_1}(\beta )\ketbra{\sqrt{\trans}\alpha }{\sqrt{\trans}\alpha }\sqrt{\trans}\beta \rangle\bra{\sqrt{\trans}\beta }\right]\nonumber\\
& =\!\int \dd^2\alpha \,\dd^2\beta \  P_{\rho_1}(\alpha )P_{\rho_1}(\beta ) \e^{-\trans|\alpha -\beta |^2}.
\end{align}

Derivatives of this expression will result in several inequalities for quasiprobability distributions. Using Eq.~\eqref{thm:PurWithP}, the $k$th derivative of the purity is given by
\eq{\frac{\partial^k \mathcal{P}(\rho_\trans)}{\partial\trans^k}\!=\!(-1)^k\!\!\int \dd^2\alpha \,\dd^2\beta  \,P_{\rho_1}(\alpha )P_{\rho_1}(\beta ) \\
\times|\alpha -\beta |^{2k}\e^{-\trans|\alpha -\beta |^2}.}
Then, according to Lemma~\ref{lem:sym}, Theorem~\ref{thm:pur}, and Corollary~\ref{thm:overlap} we have the following results:
\begin{corollary}[Inequalities for quasiprobability distributions]
\label{cor:inequalityforquasiprob}
    \begin{itemize}[wide=0.5em, leftmargin =*, before = \leavevmode\vspace{2pt}]
    \item  For pure states $\rho_1=\ketbra{\psi}{\psi}$, the integral
\eq{\int \dd^2\alpha \,\dd^2\beta \, P_{\rho_1}(\alpha )P_{\rho_1}(\beta ) |\alpha -\beta |^2\e^{-\trans|\alpha -\beta |^2}\label{eq:1derP}} is symmetric about $\trans=1/2$, positive for $\trans<1/2$ negative for $\trans>1/2$, and zero for $\trans=1/2$. 
\item For mixed states, the integral \eqref{eq:1derP} is nonnegative for $\trans\leq 1/2$.
\item Setting $\trans=0$ in Eq.~\eqref{eq:1derP}, for all mixed states, 
\eq{\int \dd^2\alpha \,\dd^2\beta \  P_{\rho_1}(\alpha )P_{\rho_1}(\beta ) |\alpha -\beta |^2\geq0. }
%
    \item For pure states $\rho_1=\ketbra{\psi}{\psi}$ and $\trans\in[0,1]$, and for all mixed states and $\trans\leq1/2$, \eq{\int \dd^2\alpha \,\dd^2\beta  \ P_{\rho_1}(\alpha )P_{\rho_1}(\beta ) |\alpha -\beta |^4\e^{-\trans|\alpha -\beta |^2}\geq0. \label{eq:2derP}}
\item Setting $\trans=0$ in Eq.~\eqref{eq:2derP}, $\forall\rho_1$
    \eq{\int \dd^2\alpha \,\dd^2\beta  \ P_{\rho_1}(\alpha )P_{\rho_1}(\beta ) |\alpha -\beta |^4\geq0.}

\item  For pure states $\rho_1=\ketbra{\psi}{\psi}$ and odd $k$, the integral
\eq{\int \dd^2\alpha \,\dd^2\beta \, P_{\rho_1}(\alpha )P_{\rho_1}(\beta ) |\alpha -\beta |^{2k}\e^{-\trans|\alpha -\beta |^2}\label{eq:kderP}} is symmetric about $\trans=1/2$, positive for $\trans<1/2$ negative for $\trans>1/2$, and zero for $\trans=1/2$. 
\item  For pure states $\rho_1=\ketbra{\psi}{\psi}$, and even $k$, the integral~\eqref{eq:kderP} is nonnegative for $\trans\in \mathbb{R}$.
\item For mixed states, the integral \eqref{eq:kderP} is nonnegative for $\trans\leq 1/2$.

\end{itemize}
\end{corollary}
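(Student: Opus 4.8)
The plan is to obtain every item of Corollary~\ref{cor:inequalityforquasiprob} as a sign statement read off from the displayed formula for $\partial^k\mathcal{P}(\rho_\trans)/\partial\trans^k$ that precedes it; that formula already identifies the integral in Eq.~\eqref{eq:kderP} with $(-1)^k\partial_\trans^k\mathcal{P}(\rho_\trans)$, and Eqs.~\eqref{eq:1derP} and \eqref{eq:2derP} with its $k=1$ and $k=2$ cases. So the whole corollary reduces to knowing the signs of the derivatives $\partial_\trans^k\mathcal{P}(\rho_\trans)$, which are already in hand. First I would record, for a pure input $\rho_1=\proj{\psi}$, that Lemma~\ref{lem:HS positive} applied with $\sigma_1=\rho_1$ makes $\mathcal{P}(\rho_\trans)$ a polynomial in $\lambda=1-2\trans$ with nonnegative coefficients, while Lemma~\ref{lem:sym} forces it to be even in $\lambda$: $\mathcal{P}(\rho_\trans)=\sum_{j\ge0}p_{2j}\lambda^{2j}$ with $p_{2j}\ge0$. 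For a general input, Lemma~\ref{lem:HS positive} alone still gives $\mathcal{P}(\rho_\trans)=\sum_{m\ge0}p_m\lambda^m$ with $p_m\ge0$, now without the evenness.

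From here I would carry out a short bookkeeping step. Since $\partial_\trans=-2\partial_\lambda$, one has $(-1)^k\partial_\trans^k\mathcal{P}(\rho_\trans)=2^k\sum_{m\ge k}\tfrac{m!}{(m-k)!}\,p_m\lambda^{m-k}$, which is exactly the integral in Eq.~\eqref{eq:kderP}. For a general state with $\trans\le\tfrac12$ we have $\lambda\ge0$, so this is a sum of nonnegative terms; that gives the mixed-state items (the second and last bullets, and the mixed part of the $k=2$ bullet), and for $k=1$ it is precisely Corollary~\ref{thm:overlap}. For a pure state only even $m$ survive, so $m-k$ has the parity of $k$: for even $k$ the expression is an even polynomial in $\lambda$ with nonnegative coefficients, hence $\ge0$ for every real $\trans$; for odd $k$ it is an odd polynomial with nonnegative coefficients, hence $\ge0$ for $\lambda\ge0$ ($\trans\le\tfrac12$), $\le0$ for $\lambda\le0$ ($\trans\ge\tfrac12$), zero at $\trans=\tfrac12$, and antisymmetric under $\trans\leftrightarrow1-\trans$; this covers the pure-state items (for $k=1,2$ one may instead quote Lemma~\ref{lem:sym} and Theorem~\ref{thm:pur} directly). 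Finally, the two $\trans=0$ statements are the evaluation at $\lambda=1$ of these same polynomials, a finite nonnegative number.

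The one genuinely delicate point, and the step I expect to be the main obstacle, is justifying termwise differentiation of the double phase-space integral and making sense of it for $\trans\le0$ and at $\trans=0$, where the Gaussian $\e^{-\trans|\alpha-\beta|^2}$ no longer provides decay and $P_{\rho_1}$ is in general only a tempered distribution rather than a function. I would circumvent this by taking the polynomial-in-$\lambda$ representation of $\mathcal{P}(\rho_\trans)$---an identity of polynomials, hence of entire functions of $\trans$, established operator-theoretically in Ref.~\cite{CompanionShortarXiv} (cf. Lemma~\ref{lem:HS positive} and Eq.~\eqref{purityDarkPort})---as the primary object: for $\trans>0$ the kernel $|\alpha-\beta|^{2k}\e^{-\trans|\alpha-\beta|^2}$ is Schwartz in $(\alpha,\beta)$, so the integral is a legitimate distributional pairing and differentiation under the integral is valid, while the remaining values of $\trans$ follow by analytic continuation (equivalently, because the polynomial is a finite sum). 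Everything else is a transcription of the purity results of Sections~\ref{sec:ResultsfromPRL} and \ref{sec:BeyondCS} into phase space, so I do not anticipate further difficulty.
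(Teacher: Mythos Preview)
Your proposal is correct and follows the same overall strategy as the paper: identify the integral in Eq.~\eqref{eq:kderP} with $(-1)^k\partial_\trans^k\mathcal{P}(\rho_\trans)$ and read off its sign from the structural properties of $\mathcal{P}(\rho_\trans)$ established in Ref.~\cite{CompanionShortarXiv}. The paper's own justification is terse---it simply cites Lemma~\ref{lem:sym}, Theorem~\ref{thm:pur}, and Corollary~\ref{thm:overlap}---which, taken literally, covers only $k=1,2$; the general-$k$ statements are then left to rest on the remark in Section~\ref{sec:BeyondCS} about higher derivatives.

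Your route is slightly different and in fact cleaner: you go straight to Lemma~\ref{lem:HS positive} and write $\mathcal{P}(\rho_\trans)$ as a polynomial in $\lambda=1-2\trans$ with nonnegative coefficients (even in $\lambda$ for pure inputs, by Lemma~\ref{lem:sym}), then differentiate term by term. This handles every $k$ uniformly and makes the parity/sign structure transparent, whereas the paper's citation of Theorem~\ref{thm:pur} and Corollary~\ref{thm:overlap} is really just the $k=1,2$ shadow of the same polynomial argument. Your added paragraph on the distributional subtleties (justifying differentiation under the integral and extending to $\trans\le0$ via the finite polynomial representation) is a point the paper does not address at all; it is a genuine technical gap in the paper's presentation that you have correctly identified and resolved. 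One small terminological note: for odd $k$ the integral is \emph{antisymmetric} under $\trans\leftrightarrow1-\trans$, as you say; the paper's phrase ``symmetric about $\trans=1/2$'' is loose wording for the same sign pattern.
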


Remark that all the inequalities in Corollary~\ref{cor:inequalityforquasiprob} are in contrast to standard manipulations with $P$-functions that produce inequalities by considering $P$ to be a valid probability distribution (such in Ref.~\cite{Richter,Bohmann}); whereas, here these relationships are true even if $P$ is negative or more singular than a delta function. Expressions for other quasiprobability distributions can be found in the same manner by using Eq.~\eqref{overlap} to express the purity of the lossy state.

Similarly, notice that  \eq{\int d^2\beta \ P_{\rho_1}(\beta )\e^{-\trans|\alpha \!-\!\beta |^2} =\frac{\pi}{\trans}P_{\rho_1}\left(\alpha ,1\!-\!\frac{2}{\trans}\right)} is another quasiprobability distribution (see Eq.~\eqref{eq:convolution}). Hence, the purity of a lossy state can also be expressed as
\eq{\tr[\rho_\trans^2]=\int \dd^2\alpha \ \frac{\pi}{\trans} P_{\rho_1}(\alpha )P_{\rho_1}\left(\alpha ,1-\frac{2}{\trans}\right),}
and taking derivatives of this expression leads to inequalities similar to the ones in Corollary~\ref{cor:inequalityforquasiprob}.

What these mean is that the extent of the singularities or negativities of quasiprobability distributions in phase space is always constrained for all physical states. Whenever a quasiprobability distribution is not regular, it becomes more regular when convolved with a Gaussian~\cite{CahillGlauber,AgarwalWolf1970,Lee1991,Lee1992,KieselVogel2010,Agudeloetal2013,Sperling2016,Lemosetal2018}, so our inequalities with various values of $\trans$ inform the required width of the Gaussians $\exp(-\trans|\alpha-\beta|^2)$ for making the distributions more regular. They also bound all of the phase space moments of the form $|\alpha-\beta|$ when taken in terms of the purity function $P_{\rho_1}(\alpha)P_{\rho_1}(\beta)\e^{-\trans|\alpha-\beta|^2}$, even when $|\alpha-\beta|$ is manifestly positive everywhere. In a competition between $|\alpha-\beta|$ growing and $\exp(-\trans|\alpha-\beta|^2)$ shrinking differently with different values of $\trans$, all of our inequalities must be satisfied for all physical states, which limits their singularities.

We can also write expressions for the overlap of two lossy states $\rho_\trans$ and $\sigma_\trans$ and obtain more general results. 
Knowing that the overlap can be computed as an integral of Wigner functions and using Eq.~\eqref{s_quasi_distribution_evolution}, we compute:
\eq{\begin{aligned}
    \tr[\rho_\trans\sigma_\trans]&=\pi\int \dd^2\alpha \ W_{\rho_\trans}(\alpha )W_{\sigma_\trans}(\alpha )\\
&=\pi\int \dd^2\alpha \ P_{\rho_\trans}(\alpha ,0)P_{\sigma_\trans}(\alpha ,0)\\
&=\!\frac{\pi}{\trans^2}\!\int \dd^2\alpha \ P_{\rho_1}\!\left(\!\frac{\alpha }{\sqrt{\trans}},1\!-\!\frac1{\trans}\!\right)\!P_{\sigma_1}\!\left(\!\frac{\alpha }{\sqrt{\trans}},1\!-\!\frac1{\trans}\!\right)\\
&=\frac{\pi}{\trans}\int \dd^2\beta \ P_{\rho_1}\left(\beta ,1-\frac1{\trans}\right)P_{\sigma_1}\left(\beta ,1-\frac1{\trans}\right), 
\label{thm:PurWithQuasiProb}
\end{aligned}}
from which we derive the following two corollaries:

\begin{corollary}\label{corr:rhosigmaHusimi}
 For $\trans\leq1/2$ and $\forall\rho_1,\sigma_1$
 $$
\begin{aligned}
\frac{1}{1-2\trans  }\int &\dd^2\alpha \dd^2\alpha ^\prime\ P_{\rho_1}\left(\alpha ,-1\right)P_{\sigma_1}\left(\alpha ^\prime,-1\right)\\
    \times& \left(\frac{2}{1-2\trans}-\frac{|\alpha -\alpha ^\prime|^2}{(1-2\trans)^2}\right)\e^{-\frac{\trans|\alpha -\alpha ^\prime|^2}{1-2\trans}}\leq0
   \end{aligned}
$$
\end{corollary}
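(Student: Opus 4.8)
The plan is to recognize the left-hand side of the displayed inequality as the first $\trans$-derivative of the Hilbert–Schmidt overlap $O_\trans=\tr[\rho_\trans\sigma_\trans]$, written entirely in terms of Husimi ($s=-1$) functions, and then to invoke Corollary~\ref{thm:overlap}, which gives $\partial O_\trans/\partial\trans\le 0$ for $\trans\le 1/2$.

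First I would start from the last line of Eq.~\eqref{thm:PurWithQuasiProb},
\[
\tr[\rho_\trans\sigma_\trans]=\frac{\pi}{\trans}\int\dd^2\beta\ P_{\rho_1}\!\left(\beta,1-\tfrac{1}{\trans}\right)P_{\sigma_1}\!\left(\beta,1-\tfrac{1}{\trans}\right),
\]
and rewrite each factor in terms of the Husimi functions $P_{\rho_1}(\cdot,-1)$ and $P_{\sigma_1}(\cdot,-1)$ using the Gaussian convolution Eq.~\eqref{eq:convolution} with shift $\Delta s=(1-\tfrac{1}{\trans})-(-1)=2-\tfrac{1}{\trans}$. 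This shift is nonpositive precisely when $\trans\le 1/2$, so the representation is legitimate on the stated range (the endpoint $\trans=1/2$, where the representation degenerates but $O_\trans$ and its derivative remain regular, being recovered as the $\trans\uparrow 1/2$ limit). Inserting both convolutions, interchanging the $\beta$ integral with the $\alpha,\alpha'$ integrals, completing the square via $|\beta-\alpha|^2+|\beta-\alpha'|^2=2\bigl|\beta-\tfrac{\alpha+\alpha'}{2}\bigr|^2+\tfrac12|\alpha-\alpha'|^2$, and performing the remaining Gaussian integral over $\beta$ collapses all the prefactors into a single $1/(1-2\trans)$, leaving
\[
\tr[\rho_\trans\sigma_\trans]=\frac{1}{1-2\trans}\int\dd^2\alpha\,\dd^2\alpha'\ P_{\rho_1}(\alpha,-1)\,P_{\sigma_1}(\alpha',-1)\,\e^{-\frac{\trans|\alpha-\alpha'|^2}{1-2\trans}}.
\]

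Next I would differentiate this identity in $\trans$ under the integral sign, which is justified since the Husimi functions are genuine probability densities ($\int P_{\rho_1}(\alpha,-1)\dd^2\alpha=1$) and the $\trans$-dependence is a smooth polynomial-times-Gaussian on compact subintervals of $(0,1/2)$. Writing $u(\trans)=\trans/(1-2\trans)$ one has $u'(\trans)=1/(1-2\trans)^2$ and $\partial_\trans[1/(1-2\trans)]=2/(1-2\trans)^2$, so the $\trans$-derivative of the prefactor times the exponential is exactly the kernel $\tfrac{1}{1-2\trans}\bigl(\tfrac{2}{1-2\trans}-\tfrac{|\alpha-\alpha'|^2}{(1-2\trans)^2}\bigr)\e^{-\trans|\alpha-\alpha'|^2/(1-2\trans)}$ appearing in the statement; hence the left-hand side of the corollary equals $\partial_\trans\tr[\rho_\trans\sigma_\trans]$. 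Corollary~\ref{thm:overlap} — equivalently Lemma~\ref{lem:HS positive}, by which $O_\trans=\sum_{m\ge 0}\lambda^m p_m$ with $p_m\ge 0$ and $\lambda=1-2\trans$, so $\partial_\trans O_\trans=-2\sum_{m\ge 1}m\lambda^{m-1}p_m\le 0$ when $\lambda\ge 0$ — then yields $\partial_\trans\tr[\rho_\trans\sigma_\trans]\le 0$ for $\trans\le 1/2$, which is the claim. The computation is routine; the only real care needed is the bookkeeping of the Gaussian normalizations — a factor $2/(\pi|\Delta s|)=2\trans/[\pi(1-2\trans)]$ from each convolution and $\int\dd^2\beta\,\e^{-4\trans|\beta|^2/(1-2\trans)}=\pi(1-2\trans)/(4\trans)$ from the final integral — which must combine with the leading $\pi/\trans$ to produce exactly $1/(1-2\trans)$, together with the mild domain subtlety at $\trans=1/2$. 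I expect keeping those constants straight to be the only, and very minor, obstacle.
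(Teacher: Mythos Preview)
Your proposal is correct and follows essentially the same route as the paper: both start from Eq.~\eqref{thm:PurWithQuasiProb}, re-express the $s=1-1/\trans$ distributions as Gaussian convolutions of the Husimi functions (valid for $\trans\le 1/2$), carry out the $\beta$-integral to obtain $\tr[\rho_\trans\sigma_\trans]=\frac{1}{1-2\trans}\int P_{\rho_1}(\alpha,-1)P_{\sigma_1}(\alpha',-1)\e^{-\trans|\alpha-\alpha'|^2/(1-2\trans)}$, differentiate in $\trans$, and invoke Corollary~\ref{thm:overlap}. Your bookkeeping of the Gaussian prefactors and the derivative kernel is accurate, and your remark on the $\trans=1/2$ endpoint is a useful caveat that the paper leaves implicit.
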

\begin{proof}
Noticing that $s=1-\frac{1}{\trans}\leq -1$ for $\trans\leq 1/2$, we can express the $s-$quasiprobability distribution in Eq.~\eqref{thm:PurWithQuasiProb}
in terms of convolutions of Gaussians to distributions with $\tau=-(1+s)=\frac{1}{\trans}-2$ (as in Eq. \eqref{eq:convolution}):
\begin{equation}
    \begin{aligned}
        \tr[\rho_\trans\sigma_\trans]=&\frac{2^2\pi}{\trans(\pi  \tau)^2}\int \dd^2\beta \ \dd^2\alpha \ \dd^2\alpha ^\prime\ \\
        &\times P_{\rho_1}\left(\alpha ,-1\right)P_{\sigma_1}\left(\alpha ^\prime,-1\right) \e^{-\frac{2|\alpha -\beta |^2}{\tau}}\e^{-\frac{2|\alpha ^\prime-\beta |^2}{\tau}}\\
        =&\frac{1}{\trans  \tau}\!\int \!\dd^2\alpha \dd^2\alpha ^\prime P_{\rho_1}\left(\alpha ,\!-\!1\right)P_{\sigma_1}\left(\alpha ^\prime,\!-\!1\right)\e^{-\frac{|\alpha \!-\!\alpha ^\prime|^2}{\tau}}.
    \end{aligned}
    \end{equation}
Next, we take derivatives of this expression with respect to $\trans$, noting that $\frac{\partial \tau^{-1}}{\partial \trans}=\frac{1}{\tau^2\trans^2}=\frac{1}{(1-2\trans)^2}$:
\begin{equation}
\begin{aligned}
        \frac{\partial\tr[\rho_\trans\sigma_\trans]}{\partial \trans}=&\frac{1}{\trans  \tau}\int \dd^2\alpha \dd^2\alpha ^\prime\, P_{\rho_1}\left(\alpha ,-1\right)P_{\sigma_1}\left(\alpha ^\prime,-1\right)\\&
    \times \left(\frac{2}{1-2\trans}-\frac{|\alpha -\alpha ^\prime|^2}{(1-2\trans)^2}\right) \e^{-\frac{|\alpha -\alpha ^\prime|^2}{\tau}}.\nonumber
   \end{aligned}
\end{equation}
    Corollary \ref{thm:overlap} complete the proof.
\end{proof}
In this last result, it is interesting to note that, while  $P_{\rho_1}(\alpha ,-1)$ and $P_{\sigma_1}(\alpha ^\prime,-1)$ are positive everywhere (they are Husimi functions of a state), it is not obvious to conclude anything about this integral since   $\left(\frac{2}{1-2\trans}-\frac{|\alpha -\alpha ^\prime|^2}{(1-2\trans)^2}\right)$ could be positive or negative. This Corollary can also be used to determine if a specific Husimi function is the quasiprobability distribution of a valid state. For example, in Fig.~\ref{fig:Husimi}, we plot three Gaussian quasiproability distributions. The first one is the Husimi function of the vacuum $P_{\ket{0}}(\alpha ,-1)=\frac{|\alpha |^{2n}}{\pi n!}\e^{-|\alpha |^2}$ \cite{linowski2023relating}, the second one is defined as $\frac{1}{2}P_{\ket{0}}(\alpha /\sqrt{2},-1)$, and the third one as $2 P_{\ket{0}}(\sqrt{2}\alpha  ,-1)$, respectively stretching and compressing the vacuum Husimi function. We verify whether Corollary~\ref{corr:rhosigmaHusimi} is satisfied by these different quasiprobability distributions: Corollary~\ref{corr:rhosigmaHusimi} is verified when choosing $\rho_1$ and $\sigma_2$ to be the vacuum Husimi function and the second Gaussian but not when they are chosen to be the vacuum Husimi function and the third Gaussian distribution, meaning that the third Gaussian distribution does not represent a valid quantum state. Our inequalities thus quantify how the possible widths of features of quasiprobability distributions are constrained for physical states.

\begin{figure*}
    \centering
    \includegraphics[width=0.98\textwidth]{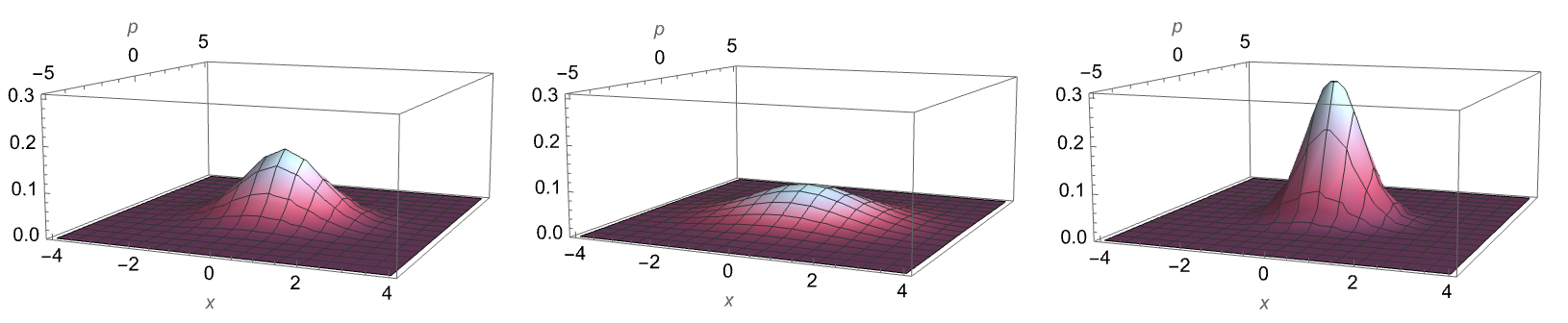}
    \caption{On the left is the Husimi quasiprobability distribution of the vacuum $P_{\ket{0}}(\alpha ,-1)=\frac{|\alpha |^{2n}}{\pi n!}\e^{-|\alpha |^2}$. The middle distribution is defined as $\frac{1}{2}P_{\ket{0}}(\alpha /\sqrt{2},-1)$, a dilated version of the vacuum Husimi distribution, and the right distribution as $2 P_{\ket{0}}(\sqrt{2}\alpha ,-1)$, a compressed version of the vacuum Husimi distribution. }
    \label{fig:Husimi}
\end{figure*}

\begin{corollary}\label{corr:rhosigmaPgeneral} 
For  $\trans\leq1/2$ and $r$ and $r^\prime$ chosen such that $rT-T+1>s>-r^\prime T+T-1$ for some $s\in \R$, 
$$\begin{aligned}
\int \dd^2\alpha \ \dd^2\alpha' \ P_{\rho_1}&\left(\alpha ,r\right) 
P_{\sigma_1}\left(\alpha' ,\tilde{r}-r+2\right)\e^{-\frac{2\trans|\alpha -\alpha '|^2}{2+\tilde{r}\trans}}\\
&\times
\frac{4 \left(2 |\alpha -\alpha '|^2+\tilde{r}\right)+2 \trans \tilde{r}^2}{(\trans \tilde{r}+2)^3}
\geq0
   \end{aligned}$$
 after defining $\tilde{r}=r+r^\prime-2$.
\end{corollary}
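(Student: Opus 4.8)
The plan is to follow the proof of Corollary~\ref{corr:rhosigmaHusimi} but with two generalizations: start from the overlap formula~\eqref{overlap} with an \emph{arbitrary} ordering parameter $s$ rather than the Wigner ($s=0$) choice, and push the input distributions to \emph{arbitrary} orders $r,r'$ rather than to the Husimi order $-1$. Concretely, I would write $\tr[\rho_\trans\sigma_\trans]=\pi\int\dd^2\alpha\,P_{\rho_\trans}(\alpha,s)P_{\sigma_\trans}(\alpha,-s)$, apply the loss-evolution law~\eqref{s_quasi_distribution_evolution} to each factor, and rescale $\alpha\mapsto\alpha/\sqrt{\trans}$ to obtain $\tr[\rho_\trans\sigma_\trans]=\frac{\pi}{\trans}\int\dd^2\beta\,P_{\rho_1}(\beta,s_1)P_{\sigma_1}(\beta,s_2)$, where $s_1=(s+\trans-1)/\trans$ and $s_2=(-s+\trans-1)/\trans$, so in particular $s_1+s_2=2-2/\trans$.

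The hypotheses $rT-T+1>s$ and $s>-r'T+T-1$ are, after multiplying through by $\trans>0$, precisely the statements $s_1<r$ and $s_2<r'$; these are exactly the conditions under which Eq.~\eqref{eq:convolution} expresses $P_{\rho_1}(\beta,s_1)$ as a \emph{convergent} Gaussian convolution of $P_{\rho_1}(\cdot,r)$ and $P_{\sigma_1}(\beta,s_2)$ as a convergent Gaussian convolution of $P_{\sigma_1}(\cdot,r')$. After substituting both convolutions, the $\beta$-integral becomes Gaussian and can be done in closed form. The essential bookkeeping is that $(r-s_1)+(r'-s_2)=r+r'-(s_1+s_2)=\tilde r+2/\trans=(2+\tilde r\trans)/\trans$ with $\tilde r=r+r'-2$, so the residual Gaussian has weight $\exp(-2\trans|\alpha-\alpha'|^2/(2+\tilde r\trans))$ and, pleasingly, every $s$-dependent normalization factor $(r-s_1)(r'-s_2)$ cancels between the two convolution prefactors and the $\beta$-integral. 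What survives is the $s$-independent identity
\[
\tr[\rho_\trans\sigma_\trans]=\frac{2}{2+\tilde r\trans}\int\dd^2\alpha\,\dd^2\alpha'\;P_{\rho_1}(\alpha,r)\,P_{\sigma_1}(\alpha',\tilde r-r+2)\,\exp\!\left(-\frac{2\trans|\alpha-\alpha'|^2}{2+\tilde r\trans}\right),
\]
using $\tilde r-r+2=r'$. (That $s$ drops out is a useful consistency check, since the claimed inequality does not mention $s$.)

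Finally I would differentiate this identity in $\trans$. Using $\partial_\trans[\trans/(2+\tilde r\trans)]=2/(2+\tilde r\trans)^2$ and $\partial_\trans[1/(2+\tilde r\trans)]=-\tilde r/(2+\tilde r\trans)^2$, a one-line computation gives $\partial_\trans\tr[\rho_\trans\sigma_\trans]$ equal to exactly minus the integral asserted in the statement. Corollary~\ref{thm:overlap} states that $\partial_\trans\tr[\rho_\trans\sigma_\trans]\le0$ for $\trans\le1/2$, which is the claim. I expect the only delicate point to be justifying the manipulations: one must check $2+\tilde r\trans>0$ on the relevant range so that both Eq.~\eqref{eq:convolution} and the intermediate Gaussian integral are legitimate --- but this is automatic, since the interval $(-r'T+T-1,\ rT-T+1)$ is nonempty precisely when $2+\tilde r\trans>0$, and in that case $r-s_1>0$ and $r'-s_2>0$ as well. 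Modulo this (and the usual caveat that the quasiprobability distributions of orders $r,r'$ may only exist as distributions rather than functions), the argument is a direct generalization of the proof of Corollary~\ref{corr:rhosigmaHusimi}.
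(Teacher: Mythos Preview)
Your proposal is correct and follows essentially the same route as the paper: write $\tr[\rho_\trans\sigma_\trans]$ via Eq.~\eqref{overlap} with arbitrary $s$, push both factors through Eq.~\eqref{s_quasi_distribution_evolution} and rescale, convolve each to orders $r,r'$ via Eq.~\eqref{eq:convolution} (the stated hypotheses on $r,r'$ being exactly the conditions $r>(s+\trans-1)/\trans$, $r'>(-s+\trans-1)/\trans$), perform the Gaussian $\beta$-integral to obtain the $s$-independent formula in $\tilde r$, differentiate in $\trans$, and invoke Corollary~\ref{thm:overlap}. Your added checks that $s$ cancels and that $2+\tilde r\trans>0$ follows from nonemptiness of the $s$-interval are nice consistency observations that the paper leaves implicit.
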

\begin{proof}
 From Eq.~\eqref{s_quasi_distribution_evolution}
and the definition of the overlap as in Eq.~\eqref{overlap}, we write for any $s\in\R$
\begin{equation}
\begin{aligned}
\tr[\rho_\trans\sigma_\trans]=\ \pi\ \int \dd^2\alpha \ &P_{\rho_\trans}(\alpha ,s)P_{\sigma_\trans}(\alpha ,-s)\\
=\frac{\pi}{\trans^2}\int \dd^2\alpha \ &P_{\rho_1}\left(\frac{\alpha }{\sqrt{\trans}},\frac{s+\trans-1}{\trans}\right)\\
   \times &P_{\sigma_1}\left(\frac{\alpha }{\sqrt{\trans}},\frac{-s+\trans-1}{\trans}\right)\\
=\!\frac{\pi}{\trans}\!\int\!\dd^2\,\beta P_{\rho_1}&\!\left(\!\beta ,\frac{s\!+\!\trans\!-\!1}{\trans}\!\right)
   P_{\sigma_1}\!\left(\!\beta ,\!\frac{\!-s\!+\!\trans\!-\!1}{\trans}\!\right)
   \end{aligned}
\end{equation}
Then, as in Eq.~\eqref{eq:convolution}, we can write a quasiprobability distribution as a convolution of Gaussian with $r>\frac{s+\trans-1}{\trans}$ and $r'>\frac{-s+\trans-1}{\trans}$:
\begin{equation}
\begin{aligned}
\tr[\rho_\trans\sigma_\trans]
   =&\int\dd^2\beta \ \dd^2\alpha \ \dd^2\alpha' \,\e^{-\frac{2\trans|\alpha -\beta |^2}{(r\trans-s-\trans+1)}-\frac{2\trans|\alpha '-\beta |^2}{(r'\trans+s-\trans+1)}} \\
   &\times    \frac{4\trans P_{\rho_1}\left(\alpha ,r\right)P_{\sigma_1}\left(\alpha' ,r'\right)}{\pi(r\trans-s-\trans+1)(r'\trans+s-\trans+1)}
   \\
    =&\frac{2}{2+\tilde{r}\trans}\int \dd^2\alpha \ \dd^2\alpha' \,P_{\rho_1}\left(\alpha ,r\right)
P_{\sigma_1}\left(\alpha' ,r'\right)\\
&\times\e^{-\frac{2\trans|\alpha -\alpha '|^2}{2+\Tilde{r}\trans}}
   \end{aligned}\label{overlaprrprime}
\end{equation}
where $\tilde{r}=r+r'-2>-2/\trans\geq-4$ for $\trans\leq1/2$.
    Next, we compute the derivative. 
    \begin{equation}
\begin{aligned}
\frac{\partial\tr[\rho_\trans\sigma_\trans]}{\partial T}
    =&\int\dd^2\alpha \ \dd^2\alpha' \, P_{\rho_1}\left(\alpha ,r\right)
P_{\sigma_1}\left(\alpha' ,r'\right)\\
&\times
\frac{-4 \left(2 |\alpha -\alpha '|^2+\tilde{r}\right)-2 \trans \tilde{r}^2}{(\trans \tilde{r}+2)^3}\e^{-\frac{2\trans|\alpha -\alpha '|^2}{2+\tilde{r}\trans}}
\\
   \end{aligned}
\end{equation}
Corollary \ref{thm:overlap} completes the proof.
\end{proof}
Note that when $r=-1$ and $\tilde{r}=-4$, Corollary~\ref{corr:rhosigmaPgeneral} is equivalent to Corollary~\ref{corr:rhosigmaHusimi} and, when $\rho=\sigma$, $r=1$, and $\tilde{r}=0$, 
Eq.~\eqref{overlaprrprime} is equivalent to Eq.~\eqref{thm:PurWithP}.

\section{Corollaries for integrals of characteristic functions}
\label{sec:CharactersiticFunctions}

Similar to the previous section, we can analyze the purity of a state when computed through its characteristic function and derive some corollaries.

Combining Eq.~\eqref{eq:puritywithChi} with Eq.~\eqref{eq:evolutionofchi} yields a formula for the purity of a lossy state $\rho_\trans$ in terms of its $s$-ordered characteristic function
\begin{align}
    \tr[\rho_T^2]
    & = \int \frac{\dd^2 \alpha}{\pi} \ \e^{-s |\alpha|^2} \left | \chi_{\rho_1} \left (\sqrt{T} \alpha, \frac{s + T - 1}{T} \right ) \right |^2 \  \nonumber\\
    & = \int \frac{\dd^2 \alpha}{\pi} \  \frac{\e^{- \frac{s}{T} |\alpha|^2}}{T} \left | \chi_{\rho_1} \left (\alpha, \frac{s + T - 1}{T} \right ) \right |^2 \  \nonumber\\
    & = \int \frac{\dd^2 \alpha}{\pi} \  \frac{\e^{- (s - 1 + \frac1{T}) |\alpha|^2}}{T} \left | \chi_{\rho_1} \left (\alpha, s \right ) \right |^2 \nonumber\\
    &= \int \frac{\dd^2 \alpha}{\pi} \  \frac{\e^{-|\alpha|^2 / T}}{T} \left | \e^{\frac{1-s}{2} |\alpha|^2} \chi_{\rho_1} \left (\alpha, s \right ) \right |^2 \label{eq:puritywithChiforloss}.
\end{align}

Compared to expressions in Eqs.~\eqref{thm:PurWithP} or \eqref{thm:PurWithQuasiProb}, which compute the purity with $s$-quasiprobability distributions, one advantage of Eq.~\eqref{eq:puritywithChiforloss},  which computes the purity with the characteristic function, is to reduce the formula from two phase space integrals to one. 
Moreover, the integrand is manifestly nonnegative, showing that purity is a convex combination of exponential factors $e^{-|\alpha|/\trans}/\trans$. These factors individually are not convex in $\trans$, making it all the more remarkable that purity is convex in $\trans$ (see Theorem~\ref{thm:pur} and Corollary~\ref{thm:overlap}).
On the other hand, the factor $\e^{-|\alpha|/\trans}$ is monotonic in $\trans$, which reveals that $\trans \tr[\rho_T^2]$ is monotonic in $\trans$. 

A disadvantage of this formula, however, is that the $k$th derivative with respect to $T$ no longer takes a simple form. Nevertheless, we can still use Lemma~\ref{lem:sym}, Theorem~\ref{thm:pur}, and Corollary~\ref{thm:overlap} to turn the $k$th derivative into equalities and inequalities for integrals involving the characteristic function of the initial state, similar to what was done in Corollary~\ref{cor:inequalityforquasiprob}. The first two derivatives are
\begin{align}
    \frac{\partial\tr[\rho_\trans^2]}{\partial \trans}\!=&\! \int \frac{\dd^2 \alpha}{\pi}\ \left (\frac{|\alpha|^2 \!-\! \trans}{\trans^2} \right ) \frac{\e^{-\frac{|\alpha|^2 }{\trans}}}{\trans} \left | \e^{\frac{1-s}{2} |\alpha|^2}  \chi_{\rho_1} \!\left (\alpha, s \right ) \right |^2,\\
    \frac{\partial^2\tr[\rho_\trans^2]}{\partial \trans^2} =& \int \frac{\dd^2 \alpha}{\pi}\ \left (\frac{|\alpha|^4 - 4|\alpha|^2 \trans + 2 \trans^2}{\trans^4} \right ) \frac{\e^{-|\alpha|^2 / \trans}}{\trans} \nonumber\\
    &\times \left | \e^{\frac{1-s}{2} |\alpha|^2}  \chi_{\rho_1} \left (\alpha, s \right ) \right |^2.
\end{align}
The second derivative must be nonnegative for an initial pure state. The integrand above is almost always nonnegative, but becomes nonpositive in the region of phase space where $T (2 - \sqrt{2}) \leq |\alpha|^2 \leq T (2 + \sqrt{2})$. Qualitatively, the support of the characteristic function of a pure state in this region must be small compared to everywhere else in phase space.

Reexamining the formula for purity in terms of the characteristic function reveals that purity is proportional to the Laplace transform of a nonnegative function.
\begin{align}
    \tr[\rho_\trans^2] 
    & = \frac{1}{\pi T} \int \dd^2 \alpha \ \e^{-\frac{|\alpha|^2 }{\trans}} |\chi_{\rho_1}(\alpha, 1)|^2 \nonumber\\
    & = \frac{1}{T} \int_0^\infty \dd \tau \ \e^{-\frac{\tau}{\trans}} \int_0^{2 \pi} \frac{\dd \theta}{2 \pi}  |\chi_{\rho_1}(\sqrt{\tau} \e^{i \theta}, 1)|^2 \nonumber\\
    & = \frac{1}{T} L\{|\bar{\chi}_{\rho_1}|^2\} (1/T).
\end{align}
The last line utilizes the definition of the Laplace transform
\begin{equation}
    L\{f\}(\omega) = \int_0^\infty \dd\tau\,\e^{- \omega \tau} f(\tau)
\end{equation}
applied to the phase-averaged squared amplitude of the normally ordered characteristic function
\begin{equation}
    |\bar{\chi}_\rho(\tau)|^2 = \int_0^{2 \pi} \frac{\dd \theta}{2 \pi} |\chi_{\rho} (\sqrt{\tau} \e^{i \theta}, 1)|^2.
\end{equation}
Expressing purity in terms of a Laplace transform allows us to apply Bernstein's theorem \cite{Bernstein}
on monotone functions, which states that the Laplace transform of a nonnegative Borel measure is completely monotonic on the interval $[0, \infty)$. A function $f(x)$ is completely monotonic when it and all of its even derivatives are nonnegative and decreasing on that interval, that is
\begin{equation}
    (-1)^k \frac{d^k f}{dx^k} \geq 0.
\end{equation}
Applied to purity, this theorem shows that $T \tr[\rho_T^2]$ is completely monotonic in $1/T$ and so
\begin{equation}
    (-1)^k \frac{d^k T \tr[\rho_T^2]}{d(1/T)^k} \geq 0.
\end{equation}
Using chain rule simplifications, we thus obtain the following corollary:
\begin{corollary}\label{cor:Bernstein}
    For all states $\rho_1$ undergoing loss $\trans$,
    \begin{equation}
    \left (T^2 \frac{d}{d T} \right )^k \Big( T \tr[\rho_T^2] \Big ) \geq 0.\nonumber
\end{equation}
\end{corollary}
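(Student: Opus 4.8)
The plan is to build directly on the Laplace-transform representation $T\tr[\rho_\trans^2] = \tfrac{1}{\trans}L\{|\bar\chi_{\rho_1}|^2\}(1/\trans)\cdot\trans = L\{|\bar\chi_{\rho_1}|^2\}(1/\trans)$ derived immediately above the corollary, and to invoke Bernstein's theorem. The only structural input required is that the integrand $|\bar\chi_{\rho_1}(\tau)|^2 = \int_0^{2\pi}\tfrac{\dd\theta}{2\pi}|\chi_{\rho_1}(\sqrt{\tau}\,\e^{i\theta},1)|^2$ is a nonnegative function of $\tau\in[0,\infty)$, which is manifest, so that $|\bar\chi_{\rho_1}(\tau)|^2\,\dd\tau$ is a nonnegative Borel measure on $[0,\infty)$. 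Bernstein's theorem then asserts that its Laplace transform $g(x) := L\{|\bar\chi_{\rho_1}|^2\}(x)$ is completely monotonic on $(0,\infty)$, i.e. $(-1)^k g^{(k)}(x)\geq 0$ for all $k\geq 0$; this is just the observation that $g^{(k)}(x) = \int_0^\infty(-\tau)^k\e^{-x\tau}|\bar\chi_{\rho_1}(\tau)|^2\,\dd\tau$ carries the sign $(-1)^k$.

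It then remains only to re-express complete monotonicity in the variable $x=1/\trans$ as the stated operator inequality in $\trans$. From $x=1/\trans$ one gets $\dd T/\dd x = -\trans^2$, so the chain rule gives the operator identity $\tfrac{\dd}{\dd x} = -\trans^2\tfrac{\dd}{\dd \trans}$ acting on functions of $\trans$; composing $k$ copies and extracting the scalar $(-1)^k$ yields $\bigl(\tfrac{\dd}{\dd x}\bigr)^k = (-1)^k\bigl(\trans^2\tfrac{\dd}{\dd \trans}\bigr)^k$. Hence $(-1)^k g^{(k)}(1/\trans) = \bigl(\trans^2\tfrac{\dd}{\dd \trans}\bigr)^k\bigl(T\tr[\rho_\trans^2]\bigr)$, and the left-hand side is $\geq 0$ by Bernstein, which is precisely the claim. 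Note this also requires interpreting the corollary as holding for $\trans>0$ (it holds throughout $\trans\in(0,\infty)$, hence in particular on the physical range $(0,1]$).

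The one place requiring care — and the step I expect to be the main, if minor, obstacle — is verifying the convergence hypothesis of Bernstein's theorem, namely that $g(x) = \int_0^\infty\e^{-x\tau}|\bar\chi_{\rho_1}(\tau)|^2\,\dd\tau$ is finite for every $x>0$, so that the interchange of differentiation and integration used above is legitimate for all orders $k$. This follows from the growth bound $|\chi_{\rho_1}(\alpha,1)|^2\leq\e^{-|\alpha|^2}$ recalled in the preliminaries, which gives $0\leq|\bar\chi_{\rho_1}(\tau)|^2\leq\e^{-\tau}$ and hence $g(x)\leq\int_0^\infty\e^{-(x+1)\tau}\,\dd\tau = 1/(x+1)<\infty$ for all $x>-1$; the same estimate shows each $g^{(k)}$ is absolutely convergent, and even the endpoint $x=0$ (i.e. $\trans\to\infty$) is harmless since $g(0)\leq 1$. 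No further analytic input is needed beyond this bound; everything else is the bookkeeping of the chain rule above.
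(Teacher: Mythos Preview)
Your proof is correct and mirrors the paper's own argument: the paper likewise writes $T\tr[\rho_T^2] = L\{|\bar\chi_{\rho_1}|^2\}(1/T)$, invokes Bernstein's theorem to obtain complete monotonicity in $1/T$, and then states that ``chain rule simplifications'' yield the corollary---your explicit operator identity $\dd/\dd(1/T) = -T^2\,\dd/\dd T$ is precisely that simplification spelled out. One minor caveat on the convergence check you flagged: the growth bound quoted from the preliminaries carries a sign slip; with the paper's conventions $\chi_\rho(\alpha,s)=\e^{s|\alpha|^2/2}\chi_\rho(\alpha,0)$ one actually has $|\chi_\rho(\alpha,1)|^2 \leq \e^{+|\alpha|^2}$, so your estimate only gives $g(x)<\infty$ for $x>1$, i.e.\ $T\in(0,1)$---still sufficient for the physical loss range (and $T=1$ follows since $g(1)=\tr[\rho_1^2]\leq 1$), but your stronger claim that the inequality extends to all $T\in(0,\infty)$ is not supported by this bound alone.
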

The inequalities corresponding to the first and second derivatives are
\begin{equation}
    T^2 \left (1 + T \frac{d}{d T} \right ) \tr[\rho_T^2] \geq 0 \label{Eq:equivtoQCSpositif}
\end{equation} and
\begin{equation}
    T^3 \left (2 + 4 T \frac{d}{d T} + 4 T^2 \frac{d^2}{d T^2} \right ) \tr[\rho_T^2] \geq 0.
\end{equation}
Note that, using Lemma \ref{thm:QCSintermsofpurity}, Eq.~\eqref{Eq:equivtoQCSpositif} is the same as $\Ccal^2(\rho_\trans)\geq0$, which is always verified since the QCS is a nonnegative quantity. These inequalities can then be applied to any method for expressing purity of a state in terms of operators or phase-space distributions to glean further properties of all quantum states.

\section{Miscellaneous results}
In attempting to prove the monotonicity of $\partial \pur/\partial T$, we proved the following Lemma:
\begin{lem}[Monotonicity of number-operator purity]
    The quantity $\tr[a^\dagger a\rho_T^2]$ is nondecreasing with $T$. Similarly, the quantity $\tr[a \rho_T a^\dagger \rho_T](1-T)/T$ is nonincreasing with $T$.
\end{lem}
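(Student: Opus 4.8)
The plan is to differentiate in $\trans$ using the loss master equation Eq.~\eqref{eq:derivFromMasterEq}, reduce the first statement to a $\trans$-independent operator inequality, and settle that inequality with a single application of the Cauchy--Schwarz inequality for the Hilbert--Schmidt inner product; the second statement I would handle by a parallel computation, or, for pure inputs, deduce from the first through the transpose identity $X(\trans)=X(1-\trans)^T$ of Ref.~\cite{CompanionShortarXiv}.

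Write $N=a^\dagger a$ and let $\mathcal D[\rho]=a\rho a^\dagger-\tfrac{1}{2}\{N,\rho\}$, so that Eq.~\eqref{eq:derivFromMasterEq} becomes $\partial_\trans\rho_\trans=-\tfrac{1}{\trans}\mathcal D[\rho_\trans]$. Differentiating under the trace, $\partial_\trans\tr[N\rho_\trans^2]=\tr[N\{\partial_\trans\rho_\trans,\rho_\trans\}]=-\tfrac{1}{\trans}\tr[N\{\mathcal D[\rho_\trans],\rho_\trans\}]$, so since $\trans>0$ it is enough to show that $\tr[N\{\mathcal D[\rho],\rho\}]\le 0$ for every density operator $\rho$ --- an inequality in which $\trans$ has disappeared. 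Expanding $\{\mathcal D[\rho],\rho\}=\{a\rho a^\dagger,\rho\}-\tfrac{1}{2}\{\{N,\rho\},\rho\}$ and using only cyclicity of the trace, one gets $\tr[N\{a\rho a^\dagger,\rho\}]=\tr[\{N,\rho\}\,a\rho a^\dagger]$ and $\tr[N\{\{N,\rho\},\rho\}]=\tr[\{N,\rho\}^2]=\|\{N,\rho\}\|_2^2$, so the target becomes $\tr[\{N,\rho\}\,a\rho a^\dagger]\le\tfrac{1}{2}\|\{N,\rho\}\|_2^2$.

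Both $\{N,\rho\}$ and $a\rho a^\dagger$ are Hermitian, so Cauchy--Schwarz gives $\tr[\{N,\rho\}\,a\rho a^\dagger]\le\|\{N,\rho\}\|_2\,\|a\rho a^\dagger\|_2$, and it suffices to verify the norm comparison $\|a\rho a^\dagger\|_2\le\tfrac{1}{2}\|\{N,\rho\}\|_2$. Squaring and again using cyclicity, $\|a\rho a^\dagger\|_2^2=\tr[a\rho a^\dagger a\rho a^\dagger]=\tr[N\rho N\rho]$, while $\|\{N,\rho\}\|_2^2=2\tr[N^2\rho^2]+2\tr[N\rho N\rho]$, and $\tr[N^2\rho^2]-\tr[N\rho N\rho]=\tfrac{1}{2}\|[N,\rho]\|_2^2\ge0$. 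Hence $\|\{N,\rho\}\|_2^2\ge 4\tr[N\rho N\rho]=4\|a\rho a^\dagger\|_2^2$, as needed, and $\partial_\trans\tr[a^\dagger a\rho_\trans^2]\ge0$ follows.

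For the quantity $\tfrac{1-\trans}{\trans}\tr[a\rho_\trans a^\dagger\rho_\trans]$ I would run the same three steps: the master equation gives $\partial_\trans\big(\tfrac{1-\trans}{\trans}\tr[a\rho_\trans a^\dagger\rho_\trans]\big)=-\tfrac{1}{\trans^2}\big(\tr[a\rho_\trans a^\dagger\rho_\trans]+(1-\trans)\tr[\mathcal D[\rho_\trans]\,\tau_\trans]\big)$ with $\tau_\trans=a^\dagger\rho_\trans a+a\rho_\trans a^\dagger\ge0$, and one wants the bracket to be nonnegative; expanding and applying Cauchy--Schwarz and the arithmetic-mean inequality to $\tr[\{N,\rho_\trans\}\tau_\trans]$ and to the Hilbert--Schmidt norms of $a\rho_\trans a^\dagger$ and $a^\dagger\rho_\trans a$ should do it. Alternatively, when the input is pure, Eq.~\eqref{eq:transpose trick} gives directly $\tfrac{1-\trans}{\trans}\tr[a\rho_\trans a^\dagger\rho_\trans]=\tr[a^\dagger a\rho_{1-\trans}^2]$, so the second statement is the first with $\trans\mapsto1-\trans$. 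I expect this second part to be the only real obstacle: unlike the first statement, whose derivative collapses to a $\trans$-free inequality valid for all $\rho$, here the $\trans$-weight $\tfrac{1-\trans}{\trans}$ survives, so one has to pick the Cauchy--Schwarz split carefully (or just restrict to pure inputs and use the transpose trick), and I would not be surprised if the clean mixed-state argument needs a little extra input, such as a monotonicity of the coefficients in the polynomial expansion of Lemma~\ref{lem:HS positive}.
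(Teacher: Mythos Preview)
Your argument for the first statement is correct and takes a genuinely different route from the paper. The paper invokes the data-processing inequality for the $\alpha=2$ quasi-relative entropy $Q_2(A\|B)=\tr[A^2 B^{-1}]$, choosing $A=\rho_1$ and $B=\tfrac{1}{\varepsilon}|0\rangle\langle0|+\sum_{m\ge1}\tfrac{1}{m}|m\rangle\langle m|$ as a regularized inverse of $N$; the key observation is that $\mathcal{E}_T[B]$ differs from $B$ only in the vacuum component, so monotonicity of $\tr[\rho_T^2\,\mathcal{E}_T[B]^{-1}]$ under the loss channel reproduces monotonicity of $\tr[N\rho_T^2]$ up to a vanishing $\varepsilon$-error. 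Your approach, by contrast, differentiates directly, reduces to the $T$-independent inequality $\tr[\{N,\rho\}\,a\rho a^\dagger]\le\tfrac{1}{2}\|\{N,\rho\}\|_2^2$, and settles it with Cauchy--Schwarz together with the identities $\|a\rho a^\dagger\|_2^2=\tr[N\rho N\rho]$ and $\|\{N,\rho\}\|_2^2-4\tr[N\rho N\rho]=2\|[N,\rho]\|_2^2\ge0$. Your proof is more elementary and self-contained; the paper's is more conceptual, tying the result to information-theoretic monotonicity and suggesting how one might generalize to $\tr[f(N)\rho_T^2]$ for other operator-monotone weights.

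For the second statement, you and the paper proceed identically: both invoke the transpose identity of Eq.~\eqref{eq:transpose trick} to rewrite $\tfrac{1-T}{T}\tr[a\rho_T a^\dagger\rho_T]=\tr[N\rho_{1-T}^2]$ and reduce to the first claim. Since that identity was established only for pure inputs, the paper's proof of the second claim, like your fallback argument, covers only the pure case; your flag that the mixed-state extension is not fully settled is accurate---neither treatment closes it, and your sketched Cauchy--Schwarz route would indeed need extra input.
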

\begin{proof}
There is an alternate form of R\'enyi relative entropy that was proven in Ref.~\cite{MosonyiHiai2011} to obey certain monotonicity relations. These stem from the ``$\alpha$-quasi-relative entropies'' defined by
\begin{equation}
    Q_\alpha(A||B)=\text{sign}(\alpha-1)\tr[A^\alpha B^{1-\alpha}]
\end{equation} for $\alpha\in[0,1)$ or $\alpha\in(1,2]$ and the support of $A$ being contained in the support of $B$, for any positive semidefinite operators $A$ and $B$. These relative entropies are monotonically decreasing under any quantum channel $A,B\to \mathcal{E}(A),\mathcal{E}(B)$.

Take $\alpha=2$, $A=\rho_1$, and $B=\frac{1}{\varepsilon}|0\rangle\langle 0|+\sum_{m\geq 1} \frac{1}{m}|m\rangle\langle m|$, where $B$ is approximately $1/(a^\dagger a)$ but is nonsingular for the vacuum state. $B$ has support everywhere for $|\varepsilon|<\infty$. Then, $\tr[\rho_T^2\mathcal{E}_T[B]^{-1}]$
increases monotonically with $T$ due to the multiplicative nature of loss channels. The magic happens by applying the loss channel on each Fock state as in Eq.~\eqref{eq:Fock loss binom}:
\begin{equation}
    \begin{aligned}
    \mathcal{E}_T\Big[\frac{1}{\varepsilon}|0\rangle\langle 0|+&\sum_{m\geq 1} \frac{1}{m}|m\rangle\langle m|\Big]\\
    =&\left(\frac{1}{\varepsilon}+\sum_{m\geq 1}\frac{(1-T)^m}{m}\right)\!|0\rangle\langle 0|\\
    &+\sum_{k=1}^\infty|k\rangle\langle k|
    \sum_{m\geq k}\binom{m}{k}\frac{T^k(1-T)^{m-k}}{m}\\
    =&\left(\frac{1}{\varepsilon}-\log T\right)|0\rangle\langle 0|+\sum_{m\geq 1}\frac{1}{m}|m\rangle\langle m|.
    \end{aligned}
\end{equation}  Putting together $B^{-1}$ and $\mathcal{E}_T[B]^{-1}$ yields, by the monotonicity of relative entropies
\begin{equation}
    \tr[\rho_T^2 a^\dagger a]\leq \tr[\rho_{T^\prime}^2 a^\dagger a]+\varepsilon\langle 0|\frac{\rho_{T^\prime}^2}{1-\varepsilon\log T^\prime}-\frac{\rho_T^2}{1-\varepsilon\log T}|0\rangle 
\end{equation} for all $T\leq T^\prime$. Since $\varepsilon$ can be made arbitrarily small, it follows that $\tr[\rho_T^2 a^\dagger a]$ never decreases with increasing $T$. 
Then, using Eq.~\eqref{eq:transpose trick}, $\tr[a\rho_T a^\dagger\rho_T](1-T)/T$ decreases monotonically with $T$. 
\end{proof}

The results for purity's convexity have implications for mathematical functions when considered for particular states. For example, choosing Fock states, we learn about hypergeometric functions and derivatives thereof that yield more hypergeometric functions. This has arisen recently in the mathematics literature from the perspective of Bernstein polynomials, where log convexity was also conjectured and demonstrated~\cite{GavreaIvan2014,Nikolov2014,Rasa2018,Rasa2019,Alzer2020}.
\begin{corollary}[Convexity of hypergeometric function]
    The function $(1-\trans)^{2 n} \, _2F_1\left(-n,-n;1;\frac{\trans^2}{(\trans-1)^2}\right)$ is convex in $\trans\in \mathbb{R}$ for all nonnegative integers $n$ and symmetric about $\trans=1/2$.
\end{corollary}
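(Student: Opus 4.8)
The plan is to recognize the displayed function as the purity of a single-mode Fock state $\ket{n}$ undergoing loss, and then to invoke Theorem~\ref{thm:pur} and Lemma~\ref{lem:sym} directly.

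First I would unfold the Gauss hypergeometric series. Since both upper parameters equal the nonpositive integer $-n$, the Pochhammer symbols obey $(-n)_k=(-1)^k n!/(n-k)!$ for $0\le k\le n$ and vanish for $k>n$, so ${}_2F_1(-n,-n;1;z)$ terminates at $\sum_{k=0}^n \binom{n}{k}^2 z^k$. Setting $z=\trans^2/(\trans-1)^2$, using $(\trans-1)^{2k}=(1-\trans)^{2k}$, and multiplying by $(1-\trans)^{2n}$ gives
\begin{equation}
    (1-\trans)^{2n}\,{}_2F_1\!\left(-n,-n;1;\tfrac{\trans^2}{(\trans-1)^2}\right)=\sum_{k=0}^{n}\binom{n}{k}^2\trans^{2k}(1-\trans)^{2n-2k},
\end{equation}
a genuine polynomial in $\trans$ of degree $2n$, so the apparent singularity at $\trans=1$ is removable and the left-hand side is smooth on all of $\mathbb{R}$.

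Next I would identify this polynomial with a purity. By Eq.~\eqref{eq:Fock loss binom}, loss sends $\ket{n}$ to the diagonal state with weights $p_k=\binom{n}{k}\trans^k(1-\trans)^{n-k}$, so its purity is $\sum_{k=0}^n p_k^2=\sum_{k=0}^n\binom{n}{k}^2\trans^{2k}(1-\trans)^{2n-2k}$---exactly the right-hand side above. Because $\ket{n}$ is pure, Theorem~\ref{thm:pur} then gives convexity in $\trans\in\mathbb{R}$, and Lemma~\ref{lem:sym} gives symmetry about $\trans=1/2$ (also visible from the reindexing $k\mapsto n-k$ in the sum), which is the claim.

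I do not expect a substantive obstacle: all of the analytic content resides in Theorem~\ref{thm:pur}, already proven in Ref.~\cite{CompanionShortarXiv}. The only point requiring care is the hypergeometric bookkeeping---confirming that the series truncates, so that the polynomial identity holds for every real $\trans$ (including where $|z|\ge 1$ and at the removable pole $\trans=1$) rather than only on the unit disc of convergence.
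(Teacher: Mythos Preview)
Your proposal is correct and follows essentially the same approach as the paper: identify the hypergeometric expression with $\pur_{|n\rangle}(\trans)=\sum_k\binom{n}{k}^2\trans^{2k}(1-\trans)^{2(n-k)}$ via Eq.~\eqref{eq:Fock loss binom}, then invoke Theorem~\ref{thm:pur} for convexity and Lemma~\ref{lem:sym} for symmetry. Your explicit verification that the ${}_2F_1$ series truncates (so the identity holds on all of $\mathbb{R}$, with the pole at $\trans=1$ removable) is a detail the paper leaves implicit.
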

\begin{proof}
    The purity of a Fock state subject to loss, given in Eq.~\eqref{eq:Fock loss binom}, is \eq{\tr[\mathcal{E}_T[|n\rangle\langle n|]^2]&=\sum_{k=0}^n\left(\binom{n}{k}T^k(1-T)^{n-k}\right)^2\\
        &=(1-\trans)^{2 n} \, _2F_1\left(-n,-n;1;\frac{\trans^2}{(\trans-1)^2}\right).} 
        This is convex due to Theorem~\ref{thm:pur} and symmetric due to Lemma~\ref{lem:sym}.
\end{proof}

\section{Conclusion}
Continuous-variable quantum information processing bridges information theoretic notions of nonclassicality such as entanglement and entropy to physical descriptions of a state through quasiprobability distributions in the phase space. Our companion paper~\cite{CompanionShortarXiv} provided a set of proofs that resolved a 20-year old conjecture that mixing any state with vacuum at a beam splitter generates the most entanglement when the beam splitter is balanced. By further examining various aspects of this work~\cite{CompanionShortarXiv}, we uncovered numerous results for fundamental and applied quantum optics. The ramifications include the impact of loss on nonclassicality measures and similarity measure of quantum states, equalities and inequalities for quasiprobability distributions, quantum mutual information at the output of a beam splitter, and inequalities for creation and annihilation operators. We unveil links between entanglement and nonclassicality and demonstrate interconnectivity of notions in quantum optics and quantum information processing that will be useful throughout both fields and at the intersection of the two for the underlying theory of photonic, and any other bosonic, quantum technologies.

\begin{acknowledgments}
    The NRC headquarters is located on the traditional unceded territory of the Algonquin Anishinaabe and Mohawk people. NLG acknowledges funding from the NSERC Discovery Grant and Alliance programs.
\end{acknowledgments}

\appendix
\section{Purity in Fock basis}
\label{Appendix:purity Fock}
A mixed state state $\rho = \sum_{k, k^\prime} \rho_{k k^\prime} \ketbra{k}{k^\prime}$ subject to loss has its purity evolve to
\eq{
\sum_{ll^\prime}\trans^{l+l^\prime}\left|\sum_k \rho_{k+l, k+l^\prime} (1-T)^k\sqrt{\binom{k+l}{k}\binom{k+l^\prime}{k}}\right|^2\\
=\sum_{l l^\prime} \frac{\trans^{l + l^\prime}}{l! l^\prime !} |\tr[a^{\dagger l} (1 - \trans)^{a^\dagger a} a^{l^\prime} \rho]|^2.}

\section{Evidence for log-convexity of the purity}
\label{Appendix:logconvexity}

The problem of log-convexity with respect to the original loss parameter $\trans$ amounts to proving the inequality
\begin{align}
    \tr[\rho_1 \otimes \rho_1 (1 - 2 \trans)^{N_- - 1} N_-]^2 \leq \tr[\rho_1 \otimes \rho_1 (1 - 2 \trans)^{N_-}]  \nonumber
    \\ \times\tr[\rho_1 \otimes \rho_1 (1 - 2 \trans)^{N_- - 2} N_- (N_- - 1)]
\end{align}
for $\trans \in [0, 1]$. Contrast this with the convexity inequality
\begin{equation}
    0 \leq \tr[\rho_1 \otimes \rho_1 (1 - 2 \trans)^{N_- - 2} N_- (N_- - 1)],
\end{equation}
which we have already proven for $T \in (-\infty, 1/2]$ for general states (see Corrolary~\ref{thm:overlap}) and for all $T \in \mathbb{R}$ when the initial state is pure (see Theorem~\ref{thm:pur}). Although only positive $T \in [0, 1]$ corresponds to physical loss, the convexity inequality holds for $T \leq 0$. The same cannot be said for the log-convexity inequality. With $\rho_1 = \ketbra{1}{1}$, the log-convexity inequality becomes
\begin{equation}
    (1 - 2 T)^2 \leq \frac{1 + (1-2T)^2}{2},
\end{equation}
which fails as soon as $|T| > 1$.

Beyond the restriction on the set of admissible values for $T$, the log-convexity inequality is also more restrictive over the states that appear inside the trace. The purity formula $\tr[\rho_T^2] = \tr[\rho_1 \otimes \rho_1 (1-2T)^{N_-}]$ features the state $\rho_1 \otimes \rho_1$. One may consider the slightly more general expression $\tr[\Phi (1-2T)^{N_-}]$, where $\Phi$ is any $2$-mode state, potentially even an entangled one. This more general expression is convex for all nonnegative $\Phi$ and all $T \in [0, 1/2]$, but it is not log-convex on that domain. Consider the entangled state $\Phi = (\ket{00} - \ket{11})(\bra{00} - \bra{11}) / 2$.  The log-convexity inequality becomes $1 \leq 0$ which is clearly not satisfied for any value of $T$. Another illuminating example is the separable state $\Phi = \ketbra{01}{01}$ for which the inequality again becomes $1 \leq 0$. This means that a proof of log-convexity must exploit the structure of the state $\rho_1 \otimes \rho_1$, not merely its positivity. 

In fact, preliminary numerical evidence suggests that the expression $\tr[\Phi (1-2T)^{N_-}]$ is log-convex for $T \in [0, 1]$ whenever $\Phi$ can be expressed as a convex combination of states of the form $\rho_1 \otimes \rho_1$. This leads us to the following conjecture.
\begin{conjecture}[Unfairness witness
]
For any state of the form $\Phi = \sum_i P_i \rho_i \otimes \rho_i$, where $\rho_i \succeq 0$ (this denotes a positive semidefinite operator), $P_i \geq 0$, $\sum_i P_i = 1$ and any $\lambda \in [-1, 1]$
\begin{equation}
    \tr[\Phi \ \lambda^{N_-}] \tr[\Phi \ \lambda^{N_- - 2} N_- (N_- - 1)] \geq \tr[\Phi \  \lambda^{N_- - 1} N_-]^2 .
\end{equation}
\end{conjecture}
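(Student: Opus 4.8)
The plan is to reduce the conjecture to its one-term case and then to attack that core statement, which is precisely Conjecture~\ref{log-conjecture}. The claimed inequality is nothing but the assertion that $f_\Phi(\lambda):=\tr[\Phi\,\lambda^{N_-}]$ is log-convex on $\lambda\in[-1,1]$, since the two ``extra'' operators are $\tfrac{\partial}{\partial\lambda}\lambda^{N_-}$ and $\tfrac{\partial^2}{\partial\lambda^2}\lambda^{N_-}$, so that the inequality reads $f_\Phi f_\Phi''\geq(f_\Phi')^2$. Log-convex functions form a convex cone: if $f,g$ are log-convex then so is $f+g$, by H\"older's inequality applied to $f(x)^{\theta}f(y)^{1-\theta}+g(x)^{\theta}g(y)^{1-\theta}$. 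Since $f_\Phi=\sum_iP_i\,f_{\rho_i\otimes\rho_i}$ with $P_i\ge0$, it therefore suffices to prove log-convexity of $\lambda\mapsto\tr[\rho\otimes\rho\,\lambda^{N_-}]$ for a single positive semidefinite $\rho$; by Eq.~\eqref{purityDarkPort} and $\lambda=1-2T$ this is exactly the statement that $\tr[\rho_T^2]$ is log-convex in $T\in[0,1]$, i.e.\ Conjecture~\ref{log-conjecture}.

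For that step I would first dispose of the classical case in order to localize the difficulty. When $\rho$ is classical, Eq.~\eqref{thm:PurWithP} expresses $\tr[\rho_T^2]$ as an integral $\int\!\!\int P(\alpha)P(\beta)\,e^{-T|\alpha-\beta|^2}$ against the nonnegative $P\otimes P$, i.e.\ a positive superposition of the log-affine functions $T\mapsto e^{-T|\alpha-\beta|^2}$, and log-convexity follows from the cone property once more. The entire problem lives in the nonclassical sector, where $P$ is not a positive measure and this argument collapses. Equivalently, writing $\lambda=e^{\ell}$ one computes $\tfrac{\partial^2}{\partial\lambda^2}\log f_\Phi=\lambda^{-2}\big(\mathrm{Var}(N_-)-\langle N_-\rangle\big)$ for the difference-mode distribution $\tr[\Phi\,\Pi^{(-)}_m]\lambda^m/f_\Phi(\lambda)$, so that for $\lambda\in(0,1]$ the conjecture says exactly that this distribution stays super-Poissonian under every exponential down-tilt; the plain Cauchy--Schwarz bound used in Appendix~\ref{Appendix:logconvexity} only yields the weaker $\mathrm{Var}(N_-)\ge0$ (log-convexity in $\ell$).

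The representation-theoretic route I would try is to find a decomposition $\tr[\rho_T^2]=\int g_x(T)\,d\mu_\rho(x)$ with $\mu_\rho\ge0$ and each $g_x$ already log-convex on $[0,1]$, so that the cone property closes the argument; the Bernstein-transform picture of Section~\ref{sec:CharactersiticFunctions} (which gives $T\tr[\rho_T^2]=L\{|\bar\chi_{\rho_1}|^2\}(1/T)$ with $|\bar\chi_{\rho_1}|^2\ge0$, hence complete monotonicity in $1/T$ by \cite{Bernstein}) and the Fock profiles $\sum_k\binom nk^2T^{2k}(1-T)^{2(n-k)}$ (known to be log-convex on $[0,1]$~\cite{GavreaIvan2014,Nikolov2014,Rasa2018,Rasa2019,Alzer2020}) are the natural ingredients, but neither the Appendix~\ref{Appendix:purity Fock} off-diagonal terms nor the Bernstein factors $e^{-|\alpha|^2/T}/T$ are individually log-convex in $T$, so a genuinely cleverer decomposition is needed. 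A parallel route is to prove the super-Poissonianity reformulation directly, by induction on a photon-number cutoff, exploiting that the difference-mode distribution of $\rho\otimes\rho$ is a ``self-overlap,'' $\tr[\rho\otimes\rho\,\Pi^{(-)}_m]=\tr_+\!\big[\langle m|_-B(\tfrac12)^{\dagger}(\rho\otimes\rho)B(\tfrac12)|m\rangle_-\big]$, so that its generating function carries a built-in symmetry under swapping the two copies.

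The main obstacle, in any of these approaches, is that the log-convexity here is a \emph{bounded-domain} phenomenon tied to $\lambda\in[-1,1]$: the two most natural strengthenings both fail already for the Fock state $\ket1$, where $\tr[\rho_T^2]=1-2T+2T^2$ is log-convex on $[0,1]$ but on no half-line, its phase-averaged $|\chi_{\rho_1}(\cdot,1)|^2=(1-\tau)^2$ is not completely monotone, and $1-2T+2T^2$ is not a Stieltjes transform $\int d\nu(x)/(1+Tx)$ with $\nu\ge0$. Hence no argument based on a global monotonicity or total-positivity property can suffice; the proof must genuinely use that $\{p_m\}$ is the difference-mode distribution of a two-copy state $\rho\otimes\rho$ --- for a general bipartite $\Phi$ the statement is false, as the examples $\Phi=\ketbra{01}{01}$ and $\Phi=(\ket{00}-\ket{11})(\bra{00}-\bra{11})/2$ of Appendix~\ref{Appendix:logconvexity} show --- and isolating the combinatorial feature of these distributions that forces super-Poissonianity under all down-tilts is where I expect essentially all of the work to lie.
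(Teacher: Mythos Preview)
The statement you are attempting is a \emph{conjecture} in the paper, not a theorem: the paper offers no proof, explicitly writing in Appendix~\ref{Appendix:logconvexity} that ``a full proof of log-convexity remains elusive'' and motivating the Unfairness witness only by ``preliminary numerical evidence.'' There is therefore no paper proof to compare your proposal against.

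Your reduction of the Unfairness conjecture to its one-term case --- via the standard fact that positive log-convex functions form a convex cone closed under addition --- is correct, and it is a genuine clarification beyond the paper: the paper presents Conjecture~\ref{log-conjecture} and the Unfairness conjecture as two separate open problems, whereas your cone argument shows they are \emph{equivalent} (the one-term case is trivially a special case, and your H\"older step supplies the converse). The affine change $\lambda=1-2T$ preserves log-convexity, and positivity of $f_{\rho\otimes\rho}(\lambda)=\pur(\rho_T)$ on $[-1,1]$ is guaranteed, so the reduction is clean.

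The remainder of your proposal is an honest inventory of approaches to Conjecture~\ref{log-conjecture} and their obstructions, and you do not claim a proof. Your observations --- that the classical case is immediate because $T\mapsto e^{-T|\alpha-\beta|^2}$ is log-affine and the cone is closed under positive integration; that for $\lambda>0$ the inequality is precisely super-Poissonianity of the tilted dark-port distribution, matching the paper's $g^{(2)}(\rho_-)\ge1$ reformulation; that the Fock-$\ket{1}$ purity $1-2T+2T^2$ kills any global complete-monotonicity or Stieltjes route --- are all correct and in places sharper than the paper's own discussion. But none of this closes the problem, and you rightly flag the crux (extracting the combinatorial consequence of the two-copy structure $\rho\otimes\rho$ that forces super-Poissonianity under every down-tilt) as the outstanding gap. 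In short: nothing is wrong here, but the conjecture remains open, both in the paper and in your proposal.
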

If the conjecture is true, the inequality provides a direct way to test whether or not the $2$-mode state $\Phi$ is ``fair'' 
in the sense that both modes contain the same state. Assuming the conjecture is true and a given state $\Phi$ fails the inequality for any value of $\lambda$, one can conclude that $\Phi$ does not constitute a ``fair'' distribution of light into the $2$ modes in the sense that each mode always contains the same state. Naturally, the test only provides a sufficient condition. There are ``unfair'' distributions (for example $\Phi = \ketbra{12}{12}$) that nevertheless pass the test. 

Finally, we note two extreme forms of the inequality, corresponding to $\lambda = 0$ and $\lambda = 1$ respectively. Let $\Phi_- = \tr_+[\Phi] = \tr_1[B(1/2) \Phi B(1/2)^\dag]$. Then, for $\lambda = 0$, the conjecture reads
\begin{equation}
    2 \bra{2} \Phi_- \ket{2} \times \bra{0} \Phi_- \ket{0} -\bra{1} \Phi_- \ket{1}^2 \geq 0.
\end{equation}
The quantity on the left-hand side can be measured experimentally using a detector that can distinguish between $n = 0$, $n = 1$, $n = 2$, or $ n \geq 3$ photons. For $\lambda = 1$, the conjecture becomes
\begin{equation}
    \tr[\Phi_- N (N - 1)] - \tr[\Phi_- N]^2 \geq 0.
\end{equation}

Interestingly, for $T \leq 1/2$, the log-convexity inequality can be reformulated in terms of the second-order correlation function $g^{(2)}$ \cite{Glauber1963quantumtheorycoherence,MandelWolf1995}
\begin{equation}
    g^{(2)}(\rho_-) \geq 1
\end{equation}
where
\begin{equation}
    g^{(2)}(\rho) = \frac{\tr[\rho N (N - 1)]}{\tr[\rho N]^2}
\end{equation}
and
\begin{equation}
    \rho_- = \frac{\tr_+[\sqrt{1 - 2 T}^{N_-} \rho_1 \otimes \rho_1 \sqrt{1 - 2 T}^{N_-}]}{\tr[\sqrt{1 - 2 T}^{N_-} \rho_1 \otimes \rho_1 \sqrt{1 - 2 T}^{N_-}]},
\end{equation}
with $\tr_+[\cdot]$ indicating a trace on the light port only (the exit port other than the dark port).
Setting $\trans = 0$ generates a remarkable conjecture: if two identical states are combined on a balanced beam splitter, the number statistics in the dark port can never be sub-Poissonian. This is not to say that the number statistics of the dark port are always classical in the sense of admitting a nonnegative $P$ distribution: two identical squeezed vacua impinging on a balanced beam splitter produce a separable state with a squeezed vacuum in each output port that has no positive $P$ distribution. In fact, it is not difficult to find a state that generates a dark port with nonclassicality in higher order correlations functions $g^{(n)}$, as these vanish for finite-dimensional states whenever $n$ is sufficiently large.

While a full proof of log-convexity remains elusive, we present here our results so far. First, the problem admits a useful simplification. For any $T \in [0, 1]$, $\rho_T$ is a physical state as long as $\rho_1$ is physical. Therefore, we need only prove that log-convexity holds for all physical states at $T = 1$. That is, it suffices to prove for all $\rho$
\begin{equation}
    \tr[\rho_1 \otimes \rho_1 \hat{S} N_-]^2 \leq \tr[\rho_1 \otimes \rho_1 \hat{S}] \tr[\rho_1 \otimes \rho_1 \hat{S} N_- (N_- - 1)]
\end{equation}
 where $\hat{S}$ is the SWAP operator as introduced in Section~\ref{sec:QCS}. This inequality can be recast as a single expectation value over $4$ copies of $\rho_1$.
\begin{equation}
    \tr[\rho_1^{\otimes 4} (\hat{S} \otimes \hat{S} N_- (N_- - 1) - \hat{S} N_- \otimes \hat{S} N_-)] \geq 0.
\end{equation}

If one can prove the log-convexity of the purity, some interesting results can be proven. For example,  we could prove that for an initial pure state--- whose QCS is necessarily greater or equal to 1---the decrease of the purity from $\trans=1$ to $\trans=1/2$ is monotonic with $\trans$. Indeed, 
taking the derivative of Eq.~\eqref{QCSwithlog} and using Conjecture~\ref{log-conjecture}, we have
$$\frac{\partial\Ccal^2(\rho_\trans)}{\partial\trans}\geq\frac{\Ccal^2(\rho_\trans)-1}{\trans}.$$
According to Theorem~\ref{thm:QCSunderlossPure}, $\Ccal^2(\rho_\trans)\geq1$ for $\trans\geq1/2$, which implies $\frac{\partial\Ccal^2(\rho_\trans)}{\partial\trans}\geq0$.
This result would only be valid for initially pure states and for less than 50\% loss. At this point, the QCS of the pure state will decrease to a value of 1. With more loss, the QCS might decrease more, but will eventually increase again to return, after 100\%, to the value 1 which is the QCS of the vacuum.

We finally note that being convex or log-convex under loss is not sufficient for an operator to be nonnegative. This is seen using the counterexample of the operator $\sigma=\frac{2}{3}\ketbra{0}{0}-\frac{1}{3}\ketbra{1}{1}+\frac{2}{3}\ketbra{2}{2}$, which is nonpositive but of trace 1 and of purity 1, evolving under loss.

\section{Second derivative of purity}
\label{app: second deriv}
The second derivative of purity with respect to loss may be written as
\begin{equation}
    \begin{aligned}
        \frac{\partial^2 \pur(T)}{\partial T^2}&=\frac{2}{T^2}\left(-\tr[N \rho^2]
        +2\tr[|a\rho a^\dagger|^2]+\tr[|N\rho|^2]
        \right.\\
        &\left.
        -4\Re\tr[\rho N  a \rho a^\dagger ]+\tr[a\rho a^{\dagger }(a^\dagger \rho a)]\right).
    \end{aligned}
\end{equation}
If we use the state evaluated at $1-T$ instead of $T$ and the transpose trick, we find
\begin{equation}
    \begin{aligned}
        \frac{\partial^2\pur}{\partial T^2}&=\frac{2}{T^2}\left(\tr[|a\rho a^\dagger|^2]+\tr[a\rho a^\dagger(a^\dagger \rho a)]
        \right.\\
        &\left.
        -2\Re\tr[\rho N a\rho a^\dagger]\right)+\frac{2}{(1-T)^2}\left(\tr[|a\rho(1-T) a^\dagger|^2]
        \right.\\
        &\left.
        +\tr[a\rho(1-T) a^\dagger(a^\dagger \rho(1-T) a)]
        \right.\\
        &\left.
        -2\Re\tr[\rho(1-T) N a\rho(1-T) a^\dagger]\right)
    \end{aligned}
\end{equation}
and
\begin{equation}
    \begin{aligned}
        \frac{\partial^2 \pur}{\partial T^2}
        &=\frac{2}{T^2}\left(-\tr[N \rho^2]+\tr[|a\rho a^\dagger|^2]+\tr[|N\rho|^2]
        \right.\\
        &\left.
        -
        2\Re\tr[\rho N a \rho a^\dagger ]\right)
        +
        \frac{2}{(1-T)^2}\left(
        -\tr[N \rho(1-T)^2]
        \right.\\
        &\left.
        +\tr[|a\rho(1-T)a^\dagger|^2]+\tr[|N\rho(1-T)|^2]\right.\\&\left.-
        2\Re\tr[\rho(1-T)N a \rho(1-T) a^\dagger ]\right).
    \end{aligned}
\end{equation}

\end{document}